 \algrenewcommand\algorithmiccomment[1]{\hfill\({}\triangleright{}\){\footnotesize#1}}%
\newtheorem{theorem}{Theorem}[section]
\newtheorem{lemma}[theorem]{Lemma}
\newtheorem{remark}[theorem]{Remark}
\newcommand{\tr}[1]{\textrm{#1}}
\definecolor{change}{HTML}{0000FF}
\newenvironment{change}{\color{blue}}{}
\begin{document}

\title{Energy Efficient Operation of Adaptive \\ Massive MIMO 5G HetNets}

\author{
Siddarth Marwaha,~\IEEEmembership{Student Member,~IEEE,} Eduard A. Jorswieck,~\IEEEmembership{Fellow,~IEEE,}
Mostafa Jassim,~\IEEEmembership{Student Member,~IEEE,}
Thomas Kuerner,~\IEEEmembership{Fellow,~IEEE,}
David Lopez Perez,~\IEEEmembership{Senior Member,~IEEE,}
Xilnli Geng, Harvey Bao% 
}

\markboth{IEEE Transactions on Wireless Communications.}%
{Marwaha \MakeLowercase{\textit{et al.}}: Energy Efficient massive MIMO 5G Networks}

\maketitle

\begin{abstract}
For energy efficient operation of the massive multiple-input multiple-output (MIMO) networks, various aspects of energy efficiency maximization have been addressed, where a careful selection of number of active antennas has shown significant gains. Moreover, switching-off physical resource blocks (PRBs) and carrier shutdown saves energy in low load scenarios. However, the joint optimization of spectral PRB allocation and spatial layering in a heterogeneous network has not been completely solved yet. Therefore, we study a power consumption model for multi-cell multi-user massive MIMO 5G network, capturing the joint effects of both dimensions. We characterize the optimal resource allocation under practical constraints, i.e., limited number of available antennas, PRBs, base stations (BSs), and frequency bands. We observe a single spatial layer achieving lowest energy consumption in very low load scenarios, whereas, spatial layering is required in high load scenarios. Finally, we derive novel algorithms for energy efficient user (UE) to BS assignment and propose an adaptive algorithm for PRB assignment and power control. All results are illustrated by numerical system-level simulations, describing a realistic metropolis scenario. The results show that a higher frequency band should be used to support UEs with large rate requirements via spatial multiplexing and assigning each UE maximum available PRBs.

\end{abstract}

\begin{IEEEkeywords}
Wireless communications, resource allocation, optimization, energy efficiency, multiple antenna networks, system-level assessment
\end{IEEEkeywords}

\IEEEpeerreviewmaketitle
%\clearpage
\section{Introduction}

\footnotetext[0]{\thanks{Parts of this work have been published in \cite{Marwaha2022}.}} 
5G technology is engineered to deliver superior performance, offering gigabit data rates, ultra-reliable low latency, and massive connectivity to link everyone and everything \cite{9678321}. The advanced hardware and performance of 5G networks contribute to improved energy efficiency (EE) measured in bits per Joule, surpassing previous technology generations. However, despite this enhanced EE, the surge in data traffic, utilization of wider bandwidths, increased antenna count, and higher levels of base station (BS) densification in 5G currently lead to heightened energy consumption. This has resulted in an augmented carbon footprint of networks in the 5G era \cite{7448820}. To meet ambitious targets for net-zero greenhouse gas (GHG) emissions, it is crucial to design future wireless networks with a primary focus on enhanced energy efficiency.

\subsection{Motivation}

It has been reported in \cite{s19143126} that $57\%$ of the total network energy is consumed by the base stations (BSs), where power amplifiers (PAs), transceivers, baseband units, and cables account for approximately $65\%$ of the BS energy \cite{9678321}. Therefore, significant attention has been directed towards enhancing these modules to reduce BS energy consumption in recent years, leading to the proposal and investigation of various energy-saving techniques.

Energy-saving techniques are generally aimed at adapting the time, space, and frequency resources of the BS to traffic demands. Over the last 15 years, solutions to improve spectral efficiency (SE) in all these dimensions have been deployed. However, the wireless network does not always require maximum SE. Instead, the SE should be carefully chosen to satisfy Quality of Service (QoS) requirements while saving energy.

In the time domain, symbol shutdown turns off all PAs when it detects that a downlink (DL) orthogonal frequency division multiplexing (OFDM) symbol has no data to send, reducing BS power consumption without affecting QoS \cite{Tan22}.

In the spatial domain, massive MIMO is the key technology for improving the coverage, throughput, and reliability of 5G networks. The SE of massive MIMO-aided 5G sites is reported to be three to five times greater than that of 4G sites using traditional radio solutions. However, it is essential to select the best massive MIMO configuration according to traffic needs to avoid performance over-provisioning and maximize energy efficiency (EE). A good scaling of massive MIMO energy consumption can be achieved by switching off part of the massive MIMO PAs when traffic requirements are low \cite{NGMN21}, thus adapting to the load and QoS requirements to avoid coverage and performance degradation.

In the frequency domain, techniques including carrier aggregation and multi-connectivity have been developed to increase available bandwidth for enhanced data rates and/or reliability in 5G. As described in \cite{Guer11}, in heterogeneous networks (HetNets) operating multiple frequencies, carrier shutdown can periodically assess the service load of the multiple carriers managed by the BS. If the load in some capacity layers is lower than a specified threshold, all the massive MIMO PAs and the baseband processing can be deactivated, resulting in significant energy savings. However, QoS may be impacted if active base stations cannot manage the increased traffic resulting from carrier shutdown.

In this paper, we focus on the online configuration of hardware components in the massive MIMO BS in spatial and frequency domains.

\subsection{State-Of-The-Art}

For an overview of energy-efficient wireless communications and fundamental green trade-offs, such as spectral efficiency (SE) versus energy efficiency (EE), deployment efficiency versus EE, delay versus power, and bandwidth versus power, as well as energy harvesting for sustainable green 5G networks, see \cite{Buzzi2016}, \cite{Zhang2017z}, and \cite{Wu2017}.

While the overview in \cite{Li2020} delves into power-saving mechanisms for User Equipment (UE) in great detail, our specific focus is directed towards reducing network energy consumption under UE quality of service (QoS) requirements. 
Detailed insights into different network power consumption models, EE metrics, and main EE-enabling technologies provided by the third generation partnership project (3GPP) new radio (NR), as well as power-saving techniques in 5G NR, can be found in \cite{9678321}. 
Additionally, \cite{Masoudi2019} provides an overview of means to monitor and evaluate EE.

Numerous recent papers address single-cell EE bounds and trade-offs, with some providing analytical results and EE bounds and trade-offs with SEs. Efficiently solving typical EE optimization problems often involves the use of fractional programming (see \cite{Zappone2015} for an overview and references therein).
In multi-cell scenarios, the model is more challenging due to complex topology, UE distributions, traffic models, wireless fading channels, protocols, algorithms, and a large number of parameters. As a result, explicit and general closed-form expressions describing the EE bounds and trade-offs of a multi-cell massive MIMO network are lacking. In the following, we summarize selected recent state-of-the-art works where the energy consumption of multi-cell wireless networks is studied.

For massive MIMO systems, a simple linear or affine model as a function of the transmit power has been identified as inadequate because it leads to unbounded EE as the number of antennas \begin{change}grows\end{change} large \cite{6891254}. Therefore, in massive MIMO systems, it is important to incorporate power consumed by different BS components, such as PAs, transceivers, analog filters, and oscillators. Accounting for circuit power consumption, in \cite{6951974} and \cite{7031971}, EE has been shown to be a quasi-concave function of \begin{change}the\end{change} number of antennas, number of UEs, and the transmit power, where EE is optimized by increasing the transmit power with the number of antennas. In \cite{7248569}, it is advised to turn off a fraction of antennas to reduce the total power consumption late \begin{change}at\end{change} night when the traffic demand is low. Whereas in \cite{8094316}, the downlink EE is maximized by adapting the number of antennas to temporal load variations over a day. In \cite{article}, the authors \begin{change}adjusted\end{change} the number of antennas and transmit data rate to maximize EE for uplink energy-efficient resource allocation in very large multi-user MIMO systems. While in \cite{7438738}, the authors \begin{change}aimed to find\end{change} the optimal densified network configuration, concluding that reducing the cell size leads to higher EE; however, the EE saturates when the circuit power dominates over transmit power.

Instead of binary decisions to switch on and off complete BS, the activation of sleep modes with finer granularity levels \cite{Debaillie2015} is considered for EE optimization \begin{change}too\end{change}. In \cite{Liu2016a}, EE with the introduction of several levels of sleep depths is optimized. The BS density for enhancing EE through traffic-aware sleeping strategies in both one- and two-tier cellular networks is optimized in \cite{Li2016z}. In \cite{Shojaeifard2017}, the most energy-efficient deployment solution for meeting certain minimum service criteria is developed, and the corresponding power savings through dynamic sleep modes are analyzed. In \cite{Chiaraviglio2017}, the rate of failures triggered by fatigue processes of BSs subject to sleep modes is controlled by an algorithm called LIFE, applied to HetNets (LTE and legacy UMTS).

In \cite{Khodamoradi2020}, the power control and EE in downlink multi-cell massive MIMO systems are investigated and optimized. \cite{Dong2020} optimizes the underlying multi-user transport layer of the frequency division duplex orthogonal frequency division multiple access (FDD-OFDMA) massive MIMO system. The goal of \cite{Salh2020} is to maximize the non-convex EE in a downlink massive MIMO system using a proposed energy-efficient low-complexity algorithm. The optimization of the power assignment to achieve the maximum EE for a downlink system of single-cell massive MIMO based on a tight approximate expression for the achievable sum-rate is studied in \cite{Li2020w}. In \cite{9676676}, the authors jointly optimize the transmit beamforming vectors of femto base stations (FBSs) and the phase-shift matrices of reconfigurable intelligent surfaces (RISs) for an RIS-aided HetNet under channel uncertainties and residual hardware impairments to maximize the minimum EE of femtocells.

In addition, we refer the reader to \cite{7378276} for a detailed overview of UE to BS association techniques covering HetNets, massive MIMO, millimeter wave (mmWave), and energy harvesting and various UE association metrics, including EE. The authors in \cite{7417354} investigate the reassignment of UEs between adjacent small cells to enable spatial multiplexing gains via multi-user MIMO and reduce energy consumption through small cell sleep state. To maximize the EE of the system, the authors in \cite{7524485} investigate the optimal on-off switching and UE association in a HetNet with massive MIMO by formulating an integer programming problem. In \cite{he2015spectrum} and \cite{7153519}, the authors considered a comprehensive power consumption model for massive MIMO HetNets covering UE to BS association but did not consider PRB allocation.

While EE has been studied from various aspects, most works reported above do not consider multi-carrier transmission with spatial and spectral resource and component allocation\begin{change}, \end{change}and the matching of UEs to BSs and frequency bands.

\subsection{Contribution and Problem Statement}
In particular, our work, differently than previous ones, investigates the EE of a HetNet massive MIMO network, presenting and working with a more comprehensive power consumption model, which encompasses intricate factors such as UE to BS and frequency band association, PRB allocation, distributed power control, and the number of antennas/RF chains used. We propose a method that deconstructs a mixed-integer non-convex programming problem into sub-problems accounting for joint energy-efficient UE to BS assignment as well as carrier, PRB, and transmit power allocation. Our proposed method uses long-term channel statistics obtained from network level simulator SiMoNe (Simulator for Mobile Networks) \cite{Rose16}, which describes a realistic metropolis (Berlin) scenario with realistic antenna deployments.

Our analysis and results show that:
\begin{enumerate}
    \item 
    Under the assumption of homogeneous BS deployment and zero-forcing (ZF) with the same energy consumption model and with the same resource allocation strategy, there exists a simple sufficient condition to optimally decide on a BS carrier shutdown, and assign the UEs of the shutting down BS to its neighboring BSs. 
    \item 
    Under the assumption of fixed inter-cell interference and the use of ZF massive MIMO precoding with equal power allocation, there exists a sufficient condition to optimally allocate power to MIMO layers and PRBs. Whether a single or multiple spatial layers are used per PRB, the optimal power allocation requires using all available PRBs.  
\end{enumerate}

The rest of the paper is organized as follows. Section \ref{sec:sysmodel} introduces the system model, elaborating on the deployment (\ref{subsec:architechure}), signal (\ref{subsec:signalmodel}), channel (\ref{subsec:channelmodel}), power consumption (\ref{subsec:powerconsmodel}),  time and download (\ref{subsec:timedownloadmodel}) models used, and  
presents the problem statement (\ref{subsec:probstatement}). Section \ref{sec:mainresults} discusses the main theoretical results and the derived UE to BS assignment (\ref{subsubsec:analyticalresults} and \ref{subsubsec:alogs}) and resource allocation (\ref{subsubsec:analyticalresultsallocation} and \ref{subsubsec:alogsallocation}) algorithms. In Section \ref{sec:numericalresults}, 
the simulation setup (\ref{subsec:simusetup}) is described, and the numerical results (\ref{subsec:singlecellres} and \ref{subsec:hetnetres}) are presented. Finally, the conclusion and future work are summarized in Section \ref{sec:conclusion}.  

\section{System Model}
\label{sec:sysmodel}

\subsection{System Architecture}
\label{subsec:architechure}

Figure \ref{fig:model} shows an example realization of the system architecture, where a BS manages a single cell, which can operate different frequency bands, e.g., $700$ MHz band with $20$ MHz bandwidth and $2.6$ GHz band with $100$ MHz bandwidth. We highlight the same by showing a single BS in a cell, i.e., $2.6$ GHz band in blue or two BSs in a cell, i.e., $700$ MHz band in green and $2.6$ GHz band in blue. Each frequency band possesses a certain number of PRBs which can be assigned to the associated UEs.

\begin{figure}[htbp]
\centering
    \begin{subfigure}{0.6\linewidth}
    \centering
    \includegraphics[width=0.9\textwidth]{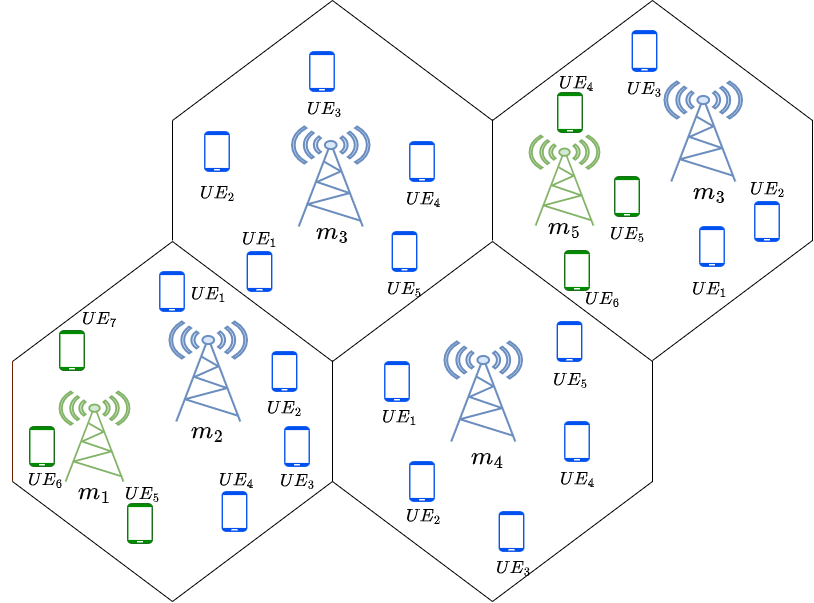}
    \caption{Macroscopic view: Multiple BSs, 
    each one managing multiple cells and serving multiple UEs.}
    \label{fig:multiplecells}
    \end{subfigure}
    \begin{subfigure}{0.3\linewidth}
    \centering
    \includegraphics[width=0.9\textwidth]{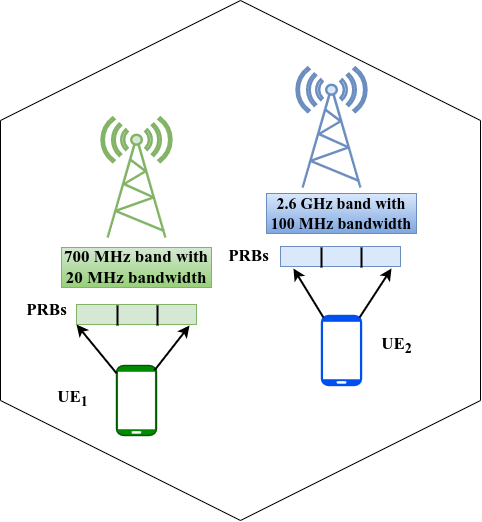}
    \caption{Microscopic view: Single BS managing two cells operating at different frequency bands, 
    where each frequency band has a given set of PRBs to assign to the UEs associated to the BS.}
    \label{fig:singlecell}
    \end{subfigure}
\caption{An example system scenario: 6 BSs consisting of one or two cells operating different frequency bands (e.g. $700$ MHz band in green and $2.6$ GHz band in blue.), 
where the UEs are uniformly distributed and the lower frequency band is sparsely deployed.}
\label{fig:model}
\end{figure}

In the system  model, $\mathcal{M} = \{1,\dots,M\}$ and $\mathcal{B} = \{1,\dots,B\}$ \begin{change}represent\end{change} the set of BSs and the set of available frequency bands per BS, respectively. $\mathcal{N}_{m}^{b}$ \begin{change}denotes\end{change} the set of available PRBs in the $m^{th}$ BS and $b^{th}$ frequency band, with cardinality $N_{m}^{b}$. 

The number of active PRBs at the $m^{th}$ BS in the $b^{th}$ frequency band is denoted by $\alpha_{m}^{b}$, i.e., $0 \leq \alpha_{m}^{b} \leq N_{m}^{b}$. Note that the triplet $\alpha = (m,b,n)$ \begin{change}represents\end{change} the PRB $n$ operated by the $m^{th}$ BS in the $b^{th}$ frequency band, and the functions $m(\alpha)$, $b(\alpha)$ and $n(\alpha)$ return the BS, frequency band and PRB index of PRB $\alpha$, respectively. The number of active transmit antennas at the $m^{th}$ BS in the $b^{th}$ frequency band is denoted by $a_{m}^{b}$, with a maximum of ${A}_{m}^{b}$ transmit antennas, i.e., $1 \leq a_{m}^{b} \leq A_{m}^{b}$. 

\begin{change}Furthermore\end{change}, $\mathcal{K} = \{1,\dots,K\}$ denotes the set of UEs. The assignment of the $k^{th}$ UE to the available PRBs in the $m^{th}$ BS and $b^{th}$ frequency band is realized by a matching function $\mu(k) = [ \alpha_1, ..., \alpha_{N_k} ]$,
which returns the set of PRBs assigned to the $k^{th}$ UE, with cardinality $N_k$. The matching function $\mu$ is overloaded,
and also returns the set of UEs assigned to a PRB $\alpha$, 
i.e.,
\begin{equation}
    \mu(\alpha) = \mu(m,b,n) = \mu(m(\alpha),b(\alpha),n(\alpha)) = \begin{cases}
        \mathcal{K}_{m,b,n} & \textrm{if UEs assigned} \\
        \emptyset & \textrm{if unassigned or not used}.
    \end{cases}
\end{equation}

\subsection{Signal Model}
\label{subsec:signalmodel}

We can distinguish between original massive MIMO beamforming models from \cite{Yang2013}, 
with recent application in \cite{Pramudito2020}, 
or the recent (new) massive MIMO beamforming models from \cite{Senel2019}. 

Typical forms of single-cell SINR expressions using ZF, often used in current networks, are\footnote{
It should be noted that the SINR expression in \cite{Yang2013}, \cite{Pramudito2020} and \cite{Senel2019} considers pilot contamination and intra-cell interference, which, however, are not accounted for in this work and hence the expressions have been adapted accordingly. 
However, pilot contamination can be included via lower channel gains or additional noise terms.} 
\begin{eqnarray}
    \tr{SINR}_k^{\tr{ZF}} & = & {(a-K) \beta_k p_k} \nonumber,
\end{eqnarray}
where $a$ is the number of active antennas, 
$K$ is the number of UEs, 
$\beta_k$ is the large-scale fading, 
and $p_k$ is the allocated transmit power. 
The multi-cell scenario studied in \cite[Section VI]{Senel2019} derived the effective signal to interference and noise ratio (SINR) for ZF as 

\begin{eqnarray}
    \tr{SINR}_{m,k}^{\tr{ZF}} = \frac{ (a_m - K) \beta_{m,k} p_{m,k}}{1 + \sum_{m' \in \mathcal{P}_m \setminus \{ m \}} (a_{m'}-K) \beta_{m',k} p_{m',k'}}, 
\end{eqnarray}
where $a_m$ is the number of active antennas at the $m^{th}$ BS, 
$\beta_{m, k}$ is the large scale fading of the $k^{th}$ UE at the $m^{th}$ BS, 
$p_{m,k}$ is the assigned transmit power to the $k^{th}$ UE by the $m^{th}$ BS, $\mathcal{P}_m$ is a set consisting of all the BSs, 
and $m'\in \mathcal{P}_m \setminus \{ m \}$ represents the set of BSs creating interference.

It is important to note that these SINR expressions are derived for independent and identically distributed (iid) Rayleigh small scale fading. 
In particular, the channel matrices of different UEs are assumed to be independent, and the number of antennas at each BS large enough. 
In \cite{Yang2017}, the case with a large number of antennas but line-of-sight (LOS) conditions is studied. In the LOS scenario, channel correlation among UEs can happen. By proper UE scheduling, and dropping highly correlated UEs, these cases can be avoided.  

In the following, we will consider the ZF achievable SINR expression from above along with the power and PRB allocation. Moreover, we will assume that a UE can only be associated to one BS and allocated to one frequency band, and thus, for the sake of simplicity, we partially remove the BS and frequency band indexes $m$ and $b$ in the following expressions.

In the downlink, the transmitted signal is generated at the BS by precoding and scaling the data symbols \cite{8641436}. Let $p_{k,\alpha}$ be the normalized transmit power applied on PRB $\alpha = (m,b,n)$  when serving the $k^{th}$ UE. Then, the data rate achieved by the $k^{th}$ UE over the set of assigned PRBs $\mu(k)$ is calculated using their average SINR\footnote{The SINR expression has been normalized by the noise power, and thus the noise variance has been included as part of the large scale fading. \begin{change}Additionally, the SINR expressions are computed under small scale Rayleigh fading. However, the large scale fading is later simulated via ray tracing.\end{change}} $\gamma_k$, as all these PRBs are coded together,
i.e.
\begin{equation}
    R_k = (\Bar{b}\cdot N_k) \log_2 \left( 1 + \frac{1}{N_k}\left(\sum_{\substack{\alpha \in \mu(k)}} \frac{ (a_{m(\alpha)}^{b(\alpha)} - |\mu(\alpha)|)\cdot p_{k, \alpha}\cdot \beta_{k, \alpha}}{1 + \sum\limits_{\substack{ \alpha': m(\alpha')\neq m(\alpha) \\ n(\alpha')=n(\alpha) \\
    b(\alpha')=b(\alpha)}} \sum\limits_{k'} p_{k', \alpha'}\cdot \beta_{k, \alpha'}}\right)\right), 
\label{eq: datarate}
\end{equation}
where $\Bar{b}$ is the PRB bandwidth\footnote{
The PRB bandwidth is $180$KHz in LTE and NR when using a sub-carrier spacing of $15$KHz.}, $a_{m(\alpha)}^{b(\alpha)}$ is the number of active transmit antennas at the $m^{th}$ BS in the $b^{th}$ frequency band, $|\mu(\alpha)|$ is the number of UEs spatially multiplexed on PRB $\alpha$, and $\beta_{k,\alpha}$ is the large scale fading experienced by the $k^{th}$ UE on PRB $\alpha = (m,b,n)$.

\subsection{Ray Tracing based Channel Model} 
\label{subsec:channelmodel}

We assume that the system operates in time division duplex (TDD) mode, where the receiver can directly estimate the channel information from the received signal \cite{4299599}, whereas, the transmitter relies on the reciprocity principle to obtain the channel information on the same flat-fading subcarrier \cite{6891254}. We consider ZF precoding for massive MIMO based on perfect channel state information at the transmitter (CSIT). Afterwards, we use the achievable rate expression obtained with channel hardening which only depends on the long-term channel gains for our algorithm design. 
    
Highly accurate channel estimates can be determined using ray tracing \cite{doi:https://doi.org/10.1002/9781118410998.ch10}. In this work, we use a ray tracing based large-scale fading predictor, called Femto Predictor (\textit{FemtoPred}), which was developed at the Institute for Communications Technology at TU Braunschweig. As input data, the ray tracer uses the following data sets, among others: building data, antenna diagrams, frequency of operation, subscriber locations. The ray search is then conducted using state-of-the-art ray tracing techniques and optimizations, as described in \cite{8739409}. Based on the results of the ray search, the influence of the free space path loss as well as that of other propagation effects such as diffraction, scattering, reflection and transmission are evaluated in order to determine each ray’s contribution to the overall received signal strength. The ray with the highest amplitude and phase shift at the receiver side was selected as the dominant ray. This ray, together with other selected paths, whose strength is larger than a threshold, contribute to the final received signal strength at the receiver side\footnote{The details for the large-scale fading computations can be found in Appendix \ref{sec:large_scale_fading}.}.

\subsection{Power Consumption Model}
\label{subsec:powerconsmodel}

The total power consumption across all the BSs and frequency bands considering load dependent and independent power consumption can be defined as \cite{Marwaha2022} \cite{6951974} \cite{9685216}, 
\begin{eqnarray}
    P_{tot} = P_{LD} + P_{LI}, 
\label{eq: egy_cons}
\end{eqnarray}
where the load dependent power consumption $P_{LD}$ is calculated as  
\begin{eqnarray}
    P_{LD} & = & \sum_{m \in \mathcal{M}} \frac{1}{\eta_{PA_{m}}} \sum_{b \in \mathcal{B}}\sum_{k \in \mu(\alpha)} \sum_{\alpha \in \alpha_{m}^{b}}  p_{k,\alpha},
\label{eq: LD}
\end{eqnarray}
and the load-independent power consumption $P_{LI}$ is computed as
\begin{eqnarray}
    P_{LI} & = & \sum_{m \in \mathcal{M}}\sum_{b \in \mathcal{B}} I_{m}^{b}\left(\frac{1}{\lambda_{m,0}^b}P_{m,FIX}^b + \frac{1}{\lambda_{m,1}^b}P_{m,SYNC}^b \right) +  \sum_{m \in \mathcal{M}}\sum_{b \in \mathcal{B}} a_{m}^b  D_{m,0}^b \nonumber\\ 
    & & + \sum_{m \in \mathcal{M}}\sum_{b \in \mathcal{B}} \left(a_{m}^b D_{m,1}^b \left(\sum_{k}N_k\right)   \right) 
    + C,
\label{eq: LI}
\end{eqnarray}
where $\eta_{PA_{m}}$ is the combined power amplifier and antenna efficiency of the $m^{th}$ BS,
$I_{m}^{b}$ is an indicator function, 
indicating if any PRB of the $m^{th}$ BS and $b^{th}$ frequency band is being used\footnote{
Note that, with this indicator function, 
we realise an ideal carrier shutdown in which the BS does not consume anything when no PRB is used.},
$P_{m, FIX}^b$ is the load-independent power consumption in the $m^{th}$ BS and $b^{th}$ frequency band required for site-cooling, control signaling, backhaul infrastructure and base-band processing, 
$P_{m, SYNC}^b$ is the load-independent power consumed by the local oscillator in the $m^{th}$ BS and $b^{th}$ frequency band, 
% $a_{m}^b$ is the number of active antennas in the $m^{th}$ BS and $b^{th}$ frequency band,
$D_{m,0}^b$ is the power consumed by the RF chain attached to an antenna in the $m^{th}$ BS and $b^{th}$ frequency band, 
including converters, mixers, filters, etc.,  
$D_{m,1}^b$ is the power consumed by the signal processing of a MIMO layer across a PRB in the $m^{th}$ BS and $b^{th}$ frequency band,
through $\lambda_{m,0}^b$ and $\lambda_{m,1}^b$ the power consumption model can capture a linear, sub-linear or an independent relationship between $P_{m, FIX}^b$, $P_{m, SYNC}^b$ and $D_{m,0}^b$, 
indicating the level of hardware sharing across the $B$ frequency bands in the $m^{th}$ BS, 
and $C$ is the fixed power consumed by the coding at a massive MIMO BS and backhaul support. 

\subsection{Time and Download Model}
\label{subsec:timedownloadmodel}

We assume a simple transmission model where ${X}_{k}$ bits of data are transmitted for each UE $k$, 
and the time ${T}_{k}$ to transfer this data is computed based on the achievable rate ${R}_{k}$ as
\begin{equation}
    T_k = \frac{{X}_{k}}{{R}_{k}}.
\end{equation}

Based on this,
the EE of the system can be defined as
\begin{eqnarray}
    EE = \frac{ \sum_{k \in \mathcal{K} } \frac{R_k}{T_k}}
    { P_{tot} }, \label{eq:EEm}
\end{eqnarray}
where $R_k$ and $P_{tot}$ are defined in (\ref{eq: datarate}) and (\ref{eq: egy_cons}), respectively. Furthermore, it should be noted that the UEs whose rate requirements cannot be satisfied, 
i.e. the outages, 
are not considered while computing the sum rate in the numerator of (\ref{eq:EEm}). 
Their contribution to the power consumption in the denominator of (\ref{eq:EEm}) is accounted for. The outage probability is reported in Sections \ref{subsec:singlecellres} and \ref{subsec:hetnetres} separately.

\subsection{Problem Statements and Preliminaries}
\label{subsec:probstatement}
In this subsection, we state the problem, under consideration and we collect some of the preliminary results from \cite{Marwaha2022} on the optimization of simple special cases of the scenario outlined above. 

If rate requirements of all UEs should be fulfilled with minimum transmit power, the following optimization problem is considered for all $m \in \mathcal{M}$ and $b \in \mathcal{B}$
\begin{subequations}
\begin{alignat}{2}
\min_{p, \alpha, \mu}   &\quad&     &  P_{tot} \label{eq:obj}\\
\text{subject to}     &\quad&     &{R_{k} \geq \underline{R}_{k} \label{eq:rc}} \\
                        &\quad&     & p_{k, \alpha}  \geq 0 \label{eq:pc} \\ 
                        &\quad&     &\sum_{k}\sum_{\alpha:m(\alpha)=m} p_{k,\alpha} \leq P^{\max} \label{eq:spc} \\
                        &\quad&     &0 \leq \alpha_m^b \leq N_m^b \label{eq:prbs_const} \\
                        &\quad&     &\max |\mu(\alpha)|< a_m^b \label{eq:ants} \\
                        &\quad&     &\left\{\bigcup_{n}\mu(m,b,n)\right\} \cap \left\{\bigcup_{n}\mu(m',b',n)\right\} = \emptyset, \forall (m',b')\neq(m,b) \label{eq:smb}
\end{alignat}
\label{eq:optprob}
\end{subequations}
where (\ref{eq:rc}) corresponds to the minimum rate constraints, (\ref{eq:pc}) to the non-negativeness of the power constraints, (\ref{eq:spc}) to the sum power constraints per BS and band, the maximum number of available PRBs constraints in (\ref{eq:prbs_const}), 
and the minimum number of required antennas constraints in (\ref{eq:ants}), and the last constraint (\ref{eq:smb}) means that the intersection between the the set containing all the users assigned to BS $m$ and the set containing all the users assigned to BS $m'$ must be an empty set. This excludes coordinated multi-point (CoMP) or multi-connectivity.

\textbf{Single User Scenario}:
As the first step, a single macro cell with one active frequency band serving a single UE is studied \cite{Marwaha2022}. The minimum rate constraint for a single UE is defined as, 
\begin{equation}
    \underline{R} \leq \Bar{b} \cdot |\alpha| \log_2 \left(1 + \frac{1}{|\alpha|} \sum_{n=1}^{|\alpha|} (a -1) \beta p_1\right),
\label{eq:rate_single_usr}
\end{equation}
where, $\underline{R}$ is the QoS requirement for the user, $\Bar{b}$ is the PRB bandwidth, $\alpha$ is the number of PRBs allocated to the user, $a$ is the number of antennas, $\beta$ is the large scale fading, and $p_{1}$ is the transmit power allocated to the user.  Assuming uniform power allocation for each PRB assigned to the user, i.e., $p_{1} = p$, the total power consumption of the macro cell can be expressed as, 
\begin{equation}
    P _{tot}  = \frac{\alpha}{\eta_{PA}}p + D_0 a + D_1 \alpha a
\label{eq:total_power_single_user}
\end{equation}
where, the constant $C$ is neglected for simplification and $p$ is obtained from  (\ref{eq:rate_single_usr}) as, 
\begin{equation}
    p \geq \frac{2^{\frac{\underline{R}}{\Bar{b}\alpha}} - 1}{(a - 1) \beta}.
\label{eq:power_single_user}
\end{equation}
To understand the behavior of the objective function, $P_{tot}$ is further analyzed and minimized with respect to number of antennas $a$ and number of PRBs $\alpha$ separately. It was observed that the optimum number of PRBs $\alpha^*$ and the optimum number of antennas $a^*$ can be computed by equating the derivative of the corresponding objective function with respect to $\alpha$ and $a$, respectively, to zero, which yields, 
\begin{equation}
    \alpha^{*} = \frac{\underline{R}\log(2)}{\bar{b}}\frac{1}{W\left({\frac{(D_1 \eta_{PA} a(a-1)\beta) - 1}{e}} \right) + 1}
\label{eq: opt_prb_single_user}
\end{equation}

\begin{equation}
    a^* = \sqrt{\frac{\alpha (2^{\frac{\underline{R}}{\bar{b}\alpha}} -1)}{\eta_{PA}\beta(D_0 + D_1 \alpha)}} + 1
\label{eq: opt_antennas_single_user}
\end{equation}
where, $W(\cdot)$ in (\ref{eq: opt_prb_single_user}) is the Lambert W function and $e$ is the Euler's number.

\begin{remark}
Depending on the parameters $\eta_{PA}$, $D_0$, $D_1$, $\beta$, and $\underline{R}$ there is an optimal number of active antennas. However, numerical evidence suggests that minimum number of antennas should be used as long as sufficient number of PRBs is available.
\end{remark}

\begin{remark}
Looking at the properties of the optimum $a^*$ and $\alpha^*$, we observe  that there must be an optimal number of PRBs to achieve a given rate while consuming lowest total power.    
\end{remark}

\textbf{Simple Multiuser Scenario:}
The total power consumption for a multi-user scenario can be computed as
% (\boldsymbol{\alpha})
\begin{equation}
    P_{tot} = \frac{1}{\eta_{PA}}\sum_{k=1}^K \alpha_k p_k + D_0 a + D_1 a K \sum_{k=1}^K \alpha_k.
\end{equation}
The following analysis considers a symmetric scenario, such that $\beta_1 = \beta_2 = \beta$, and $\underline{R}_1 = \underline{R}_2 = \underline{R}$ to compute the total power consumed with and without spatial multiplexing. For convenience, we start with this assumption and later relax it. 

\begin{lemma}[Lemma 1 \cite{Marwaha2022}]
Consider a symmetric scenario with $K$ users, where $\beta_1 = \ldots = \beta_K = \beta$ and $\underline{R}_1 = \dots = \underline{R}_K = R$. Then, the minimum power consumption is achieved for $\alpha_1 = \ldots = \alpha_K = \frac{\pi}{K}$, if $\pi$ is the total number of PRBs used. 
\label{lemma_symmetric}
\end{lemma}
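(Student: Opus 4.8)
The plan is to reduce the claim to a one-dimensional convexity property of the per-UE transmit power and then close the argument by symmetry. First I would fix the total number of used PRBs $\pi=\sum_{k=1}^{K}\alpha_k$ and the number of active antennas $a$, and examine the three summands of $P_{tot}=\frac{1}{\eta_{PA}}\sum_{k=1}^{K}\alpha_k p_k+D_0 a+D_1 a K\sum_{k=1}^{K}\alpha_k$. The circuit term $D_0a$ and the processing term $D_1 a K\sum_k\alpha_k=D_1 a K\pi$ depend only on $a$ and on the fixed total $\pi$, so they are invariant under any redistribution of the PRBs among the UEs. Hence, for fixed $a$ and $\pi$, minimizing $P_{tot}$ over the split $\{\alpha_k\}$ is equivalent to minimizing the load-dependent term $\frac{1}{\eta_{PA}}\sum_{k=1}^{K}\alpha_k p_k$ alone.

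Next I would make the rate constraint tight. Since $P_{tot}$ is strictly increasing in every $p_k$, each UE uses the least power that meets $R_k\ge R$; by (\ref{eq:power_single_user}), for a UE served on $\alpha_k$ PRBs with uniform power this is $p_k=\bigl(2^{R/(\Bar{b}\,\alpha_k)}-1\bigr)/(\kappa\beta)$, where $\kappa$ is the effective array-gain constant (of the form $a-|\mu(\alpha)|$), common to all UEs in the symmetric setting and independent of how $\pi$ is partitioned. Substituting, the objective becomes $\frac{1}{\eta_{PA}\kappa\beta}\sum_{k=1}^{K}\alpha_k\bigl(2^{R/(\Bar{b}\,\alpha_k)}-1\bigr)$. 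Writing $g(\alpha):=\alpha\,2^{R/(\Bar{b}\,\alpha)}$ and using $\sum_k\alpha_k=\pi$, this equals a positive constant times $\bigl(\sum_{k=1}^{K}g(\alpha_k)-\pi\bigr)$, so the whole task collapses to minimizing $\sum_{k=1}^{K}g(\alpha_k)$ subject to $\sum_{k=1}^{K}\alpha_k=\pi$ and $\alpha_k\ge 0$.

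The decisive step is to establish that $g$ is strictly convex on $\alpha>0$. Setting $c:=R/\Bar{b}>0$, a direct differentiation gives $g''(\alpha)=2^{c/\alpha}\,(c\ln 2)^2/\alpha^3$, which is strictly positive. With $g$ strictly convex, Jensen's inequality yields $\frac1K\sum_{k=1}^{K}g(\alpha_k)\ge g\bigl(\frac1K\sum_{k=1}^{K}\alpha_k\bigr)=g(\pi/K)$, with equality iff all $\alpha_k$ are equal; equivalently, the Lagrange stationarity condition $g'(\alpha_k)=\lambda$ together with strict convexity forces a single common value. Thus the symmetric allocation $\alpha_1=\dots=\alpha_K=\pi/K$ is the unique minimizer, as claimed.

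I expect the \emph{main obstacle} to be precisely the convexity computation for $g$: the product structure $\alpha\,2^{c/\alpha}$ and the cancellation that leaves the clean expression $g''=2^{c/\alpha}(c\ln 2)^2/\alpha^3$ must be tracked carefully, since everything downstream is immediate once positivity is known. Two points I would state explicitly for rigor are the continuous relaxation of the integer PRB counts $\alpha_k$ that the Jensen argument uses, and the fact that the array-gain factor $\kappa$ and the circuit terms are genuinely independent of the split of the fixed budget $\pi$ — which is what licenses discarding them at the outset and reduces a multivariate problem to the single convex function $g$.
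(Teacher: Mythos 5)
Your proof is correct, and it is the natural argument for Lemma~\ref{lemma_symmetric}. One caveat on comparison: the paper never proves this lemma in its own text --- it is imported as a preliminary result from the conference version \cite{Marwaha2022} --- so there is no in-document proof to check you against; I can only assess the argument on its merits. On those merits it holds up: fixing $a$ and $\pi$ makes $D_0 a$ and $D_1 a K \pi$ constants; strict monotonicity of the load-dependent term in each $p_k$ (for $\alpha_k>0$) makes the rate constraints tight; and the resulting objective $\sum_{k}\alpha_k\bigl(2^{R/(\bar{b}\alpha_k)}-1\bigr)=\sum_k g(\alpha_k)-\pi$ with $g(\alpha)=\alpha\,2^{R/(\bar{b}\alpha)}$ is correctly dispatched by strict convexity and Jensen; your computation $g''(\alpha)=2^{c/\alpha}(c\ln 2)^2/\alpha^3>0$ checks out. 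The one place to tighten is the array-gain constant $\kappa$: writing it as ``of the form $a-|\mu(\alpha)|$'' leaves open the possibility that $\kappa$ depends on the allocation, which would break the reduction to the single function $g$. In the regime the lemma actually addresses (disjoint PRB allocation, i.e., NSM), every PRB carries exactly one user, so $\kappa=a-1$ for all $k$ irrespective of the split and the reduction is legitimate; under SM all users occupy the same PRBs and the splitting question does not arise at all. Saying this explicitly, together with the continuous relaxation of the integer $\alpha_k$ that you already flag, makes the proof complete.
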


\begin{theorem}[Proposition 1 in \cite{Marwaha2022}]
For a symmetric scenario, if the rate requirement $\underline{R}$ approaches small values, i.e. $\underline{R} \to 0$, the total power consumed when the users are not spatially multiplexed is lower than when the users are spatially multiplexed. Then, the optimal number of PRBs is one and minimum number of antennas should be used (for example $3$ antennas for $2$ users if spatially multiplexed or $2$ antennas for $2$ users if not spatially multiplexed)
\end{theorem}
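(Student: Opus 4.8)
The plan is to show that, in the symmetric case, the load-dependent part of $P_{tot}$ becomes negligible as $\underline{R}\to 0$, so that the comparison is decided entirely by the load-independent terms, which penalise spatial multiplexing only through the larger antenna count it requires. First I would use Lemma \ref{lemma_symmetric} to restrict to the equal allocation $\alpha_1=\dots=\alpha_K=\alpha$, so that each candidate scheme is described by a pair $(\alpha,a)$, where $\alpha$ is the number of PRBs per user and $a$ the number of active antennas. In the non-multiplexed scheme each PRB carries a single layer, so constraint (\ref{eq:ants}) forces $a\ge 2$ and the array gain is $a-|\mu(\alpha)|=a-1$; in the fully multiplexed scheme all $K$ users share the PRBs, so $a\ge K+1$ and the gain is $a-K$. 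Solving the per-user rate constraint (the $K$-user analogue of (\ref{eq:rate_single_usr})) at equality gives the minimum per-PRB power $p=\dfrac{2^{\underline{R}/(\bar{b}\,\alpha)}-1}{(a-|\mu(\alpha)|)\,\beta}$, which I would substitute into the multi-user expression for $P_{tot}$ to obtain a closed form in $(\alpha,a)$ for each scheme, noting that both schemes use the same number of layers $\sum_k\alpha_k=K\alpha$.

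Next I would pass to the limit $\underline{R}\to 0$. Using $2^{\underline{R}/(\bar{b}\alpha)}-1=\frac{\underline{R}\ln 2}{\bar{b}\alpha}+o(\underline{R})$, the load-dependent term reads
\[
\frac{1}{\eta_{PA}}\,K\alpha\,p=\frac{K\,\underline{R}\ln 2}{\eta_{PA}\,\bar{b}\,(a-|\mu(\alpha)|)\,\beta}+o(\underline{R}),
\]
in which the factor $\alpha$ cancels and which tends to $0$. Two consequences yield the first half of the statement. Since the only remaining $\alpha$-dependence sits in the increasing term $D_1 a K\sum_k\alpha_k=D_1 a K^2\alpha$, the cost is minimised at the smallest feasible value $\alpha=1$, i.e. one PRB per user; this is the multi-user echo of (\ref{eq: opt_prb_single_user}), where $\alpha^{*}\to 0$ as $\underline{R}\to 0$. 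And since the load-independent terms $D_0 a+D_1 a K^2\alpha$ are strictly increasing in $a$ while the only term that rewards extra antennas vanishes in the limit, the minimum feasible antenna count is optimal — $a=2$ without multiplexing and $a=K+1$ with it.

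Finally I would compare the two schemes at this common optimum $\alpha=1$ with their respective minimum antenna counts. The key observation is that at minimum antennas the array gain equals $a-|\mu(\alpha)|=1$ in both schemes, so the (already vanishing) load-dependent powers coincide and the entire difference is load-independent:
\[
P_{tot}^{\mathrm{mux}}-P_{tot}^{\mathrm{nomux}}=(K-1)D_0+(K-1)K^{2}D_1=(K-1)\bigl(D_0+K^{2}D_1\bigr)>0
\]
for $K\ge 2$, which gives the claimed strict ordering and, for $K=2$, the stated $3$-versus-$2$ antenna figures.

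I expect the main obstacle to be making the order of optimisation rigorous: one must show that the minimiser over $(\alpha,a)$ of each scheme \emph{collapses} to the smallest feasible pair as $\underline{R}\to 0$, rather than merely evaluating both schemes there. I would handle this by noting that after the cancellation of $\alpha$ the load-dependent contribution is $O(\underline{R})$ and depends on the antennas only through the gain $a-|\mu(\alpha)|$, so for $\underline{R}$ small enough the strictly increasing, rate-independent load-independent terms dictate the minimiser; integrality of $\alpha$ and $a$ then pins it to the minimum feasible values exactly. It is worth flagging why the restriction to $\underline{R}\to 0$ is essential for the ``one PRB / minimum antennas'' conclusion: at larger $\underline{R}$ the exponential factor $2^{\underline{R}/(\bar{b}\alpha)}$ makes spreading a user over more PRBs (and raising $a$) worthwhile, and the reuse of a single PRB by all multiplexed users becomes advantageous only once the PRB budget (\ref{eq:prbs_const}) binds — the high-load regime lying outside this lemma.
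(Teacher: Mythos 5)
Your proof is correct. The paper itself does not contain a proof of this statement (it is imported as Proposition~1 from the conference version \cite{Marwaha2022}), so no line-by-line comparison is possible; but your argument is sound and is the natural one for this model: the first-order expansion makes the load-dependent power $O(\underline{R})$ and independent of $\alpha$, the load-independent cost $D_0 a + D_1 a K^2 \alpha$ then forces $\alpha=1$ and the minimum feasible antenna counts ($a=2$ for NSM, $a=K+1$ for SM), at which both ZF gains equal one so the load-dependent parts coincide exactly and the gap is the purely load-independent quantity $(K-1)\left(D_0+K^2 D_1\right)>0$, and your discreteness argument for why each scheme's minimizer collapses to the smallest feasible $(\alpha,a)$ for sufficiently small $\underline{R}$ closes the only genuine subtlety.
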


\begin{theorem}[Proposition 2 in \cite{Marwaha2022}]
There exists a specific number of PRBs $\alpha^*$ for which $P_{tot}^{NSM} = P_{tot}^{SM}$. For all $\alpha < \alpha^*$, $P_{tot}^{NSM} < P_{tot}^{SM}$, whereas, for all $\alpha > \alpha^*$, $P_{tot}^{NSM} > P_{tot}^{SM}$.
\end{theorem}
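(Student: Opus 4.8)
The plan is to reduce the two-scheme comparison to the sign analysis of a single scalar function of $\alpha$. First I would invoke Lemma~\ref{lemma_symmetric}: because the scenario is symmetric, the minimum-power allocation spreads PRBs equally, so both the non--spatially--multiplexed (NSM) and the spatially multiplexed (SM) configurations are parameterised by a single PRB count $\alpha$ (the number of PRBs over which each user's codeword is spread). I would then fix the number of active antennas to its minimum in each regime, as suggested by the earlier remark, so that the effective array gain $a-|\mu(\alpha)|$ equals one in both cases: $a^{NSM}=2$ with one layer per PRB and $a^{SM}=K+1$ with $K$ layers per PRB, both consistent with the antenna constraint. Inverting the single-user rate as in (\ref{eq:power_single_user}) then gives the common per-PRB transmit power $p(\alpha)=(2^{\underline{R}/(\bar{b}\alpha)}-1)/\beta$.

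Next I would assemble $P_{tot}^{NSM}(\alpha)$ and $P_{tot}^{SM}(\alpha)$ from the power-consumption model (\ref{eq: egy_cons})--(\ref{eq: LI}), separating the load-dependent (PA) contribution from the load-independent (RF-chain plus per-PRB baseband) contribution. The key structural observation I would establish is that the \emph{total radiated power coincides} for the two schemes: NSM transmits $p(\alpha)$ on each of $K\alpha$ disjoint PRB--layers, while SM transmits $p(\alpha)$ on each of $K$ layers over $\alpha$ shared PRBs, and both products equal $K\alpha\,p(\alpha)$. Hence the PA terms cancel in the difference and only the load-independent terms survive. There the schemes pull in opposite directions: the RF-chain term $D_0a$ favours NSM by the constant $D_0(K-1)$ (SM lights up $K-1$ extra antennas), whereas the per-PRB processing term, scaling with the number of \emph{occupied} PRBs, favours SM (NSM occupies $K\alpha$ PRBs against SM's $\alpha$) by an amount growing linearly in $\alpha$.

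I would then define $\Delta(\alpha)=P_{tot}^{NSM}(\alpha)-P_{tot}^{SM}(\alpha)$ and show it collapses to an affine function of $\alpha$, proportional to $(K-1)(D_1\alpha-D_0)$ up to the precise processing coefficient, hence strictly increasing, negative for small $\alpha$ and positive for large $\alpha$. Existence of the crossover $\alpha^*$ follows from the intermediate value theorem and uniqueness from strict monotonicity; reading off the sign of $\Delta$ on either side of $\alpha^*$ yields exactly $P_{tot}^{NSM}<P_{tot}^{SM}$ for $\alpha<\alpha^*$ and the reverse for $\alpha>\alpha^*$, in agreement with the low-load conclusion of the preceding proposition.

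The step I expect to be the main obstacle is justifying the cancellation of the load-dependent part and, failing that, pinning down the sign of the slope of $\Delta$. If the PA terms do not cancel exactly in the paper's bookkeeping, $\Delta$ becomes the sum of a convex, decreasing transmit-power difference (a difference of two $2^{c/\alpha}$-type terms, which I would control using the same convexity already exploited to obtain (\ref{eq: opt_prb_single_user})) and a linear processing term of fixed sign; the crux is then to show this sum still changes sign exactly once over the admissible range of $\alpha$. I would secure this by differentiating $\Delta$ and verifying that $\Delta'$ keeps a constant sign, reducing the entire statement to a one-dimensional monotonicity check.
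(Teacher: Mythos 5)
This paper never actually proves the statement --- it is imported as a preliminary result from \cite{Marwaha2022}, and the appendices only prove Theorems \ref{theo:1} and \ref{theo:2} --- so the comparison can only be against the model of Section \ref{subsec:probstatement} and the source it cites. With that said, your reconstruction is correct and is essentially the intended argument: fixing the per-user PRB count $\alpha$, taking minimum antennas ($a^{NSM}=2$, $a^{SM}=K+1$) so that both effective ZF gains equal one, observing that the radiated power $K\alpha\,p(\alpha)$ with $p(\alpha)=(2^{\underline{R}/(\bar{b}\alpha)}-1)/\beta$ is identical in the two schemes, and reducing the comparison to the affine difference of the load-independent terms, which is negative at $\alpha=0$, strictly increasing, and hence has a unique zero $\alpha^*$ of the form $D_0/D_1$ up to the processing coefficient.

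The one point you must make explicit --- and it is exactly the obstacle you flagged --- is the bookkeeping of the $D_1$ term, because the truth of the statement hinges on it. Your proof uses the convention that baseband processing scales with the number of \emph{occupied} PRBs: NSM pays for $K\alpha$ of them, SM for only $\alpha$, regardless of how many layers share them. Under that convention $\Delta(\alpha)$ has positive slope and the crossover exists. If one instead reads the multi-user model displayed in the paper,
\begin{equation*}
P_{tot} = \frac{1}{\eta_{PA}}\sum_{k=1}^K \alpha_k p_k + D_0 a + D_1 a K \sum_{k=1}^K \alpha_k ,
\end{equation*}
literally, with $\alpha_k$ the PRBs assigned to user $k$, then $\sum_k \alpha_k = K\alpha$ for \emph{both} schemes (SM's shared PRBs are counted once per user), every load-independent term favours NSM, and
\begin{equation*}
\Delta(\alpha) = -(K-1)D_0 - (K-1)K^2 D_1 \alpha < 0 \quad \text{for all } \alpha ,
\end{equation*}
so no crossover exists and the proposition is false. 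Hence the occupied-PRB reading is not merely convenient; it is necessary, and it is the one consistent with \cite{Marwaha2022} and with the paper's remark that one should fill the spectral domain before activating spatial layers. State it as an assumption on the consumption model rather than as a consequence of the displayed formula. Once that is fixed, your fallback plan (controlling a non-cancelling PA difference by monotonicity) is unnecessary --- the PA terms cancel exactly.
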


\begin{remark}
The preliminary results above indicate that there exists a dichotomy between SM and NSM depending on the load of the cell. Furthermore, it seems more efficient to first fill up the spectral domain (allocating PRBs) than activating the next antennas and switch on spatial layers. The results in Section~\ref{sec:mainresults} also confirm this underlying intuition for multi-cell systems.     
\end{remark}

\section{UE Matching and Resource Allocation}
\label{sec:mainresults}
The programming problem in (\ref{eq:optprob}) is a mixed-integer non-convex programming problem which is difficult to solve jointly and globally. Therefore, we consider a divide-and-conquer approach. As illustrated in Figure \ref{fig:approach}, the workflow  starts with the generated (or measured) long-term channel parameters and rate requirements obtained from network level simulation. The first computational step is to perform the UE assignment to the BS. After the UE assignment is fixed, the component allocation (PRBs and spatial layers) together with power control is performed, assuming equal power is allocated across all PRBs assigned to a UE.

\begin{figure}[htp]
    \centering
    \includegraphics[width=.7\linewidth]{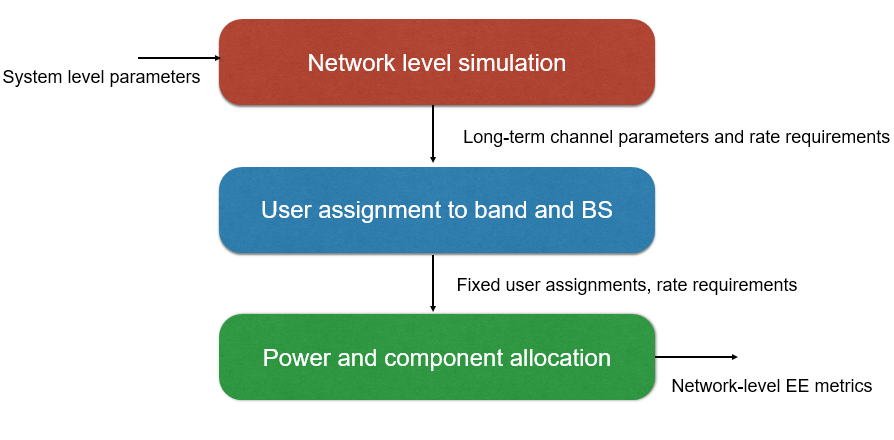}
    \caption{Approach to deconstruct the problem complexity into UE assignment and resource and component allocation.}
    \label{fig:approach}
\end{figure}  

\subsection{UE to BS Assignment}
\label{subsec:uebsassignment}
The baseline algorithm for UE assignment consists of choosing the band and BS combination which results in the minimum fading or maximum (channel) gain, i.e., 
\begin{equation}
    \mu(k) = \max_{\alpha} \beta_{k,\alpha}. \label{eq:bua} 
\end{equation}

In the following, the matching found by the basline algorithm is improved. 

\subsubsection{Analytical Results}
\label{subsubsec:analyticalresults}
The load-independent power $P_{LI}$ leads to the sub-optimality of (\ref{eq:bua}) when trying to solve (\ref{eq:optprob}). Furthermore, we conjecture that a lower number of UEs served by a BS corresponds to lower EE, because the constant power consumption dominates. In particular, BSs serving a single UE in one band can achieve a very low EE. Note that the EE of a BS can be defined as the ratio of the sum of rates delivered by the BS divided by the total consumed energy for delivering this rate, i.e., $E_m = \frac{ \sum_{k \in \mu(m)} R_k}{P_m}$, where, the \begin{change}power\end{change} consumed $P_m$ corresponds to the part of $P_{tot}$ in (\ref{eq: egy_cons}) for the one term related to BS $m$ in band $b$, i.e., 
\begin{eqnarray}
    P_m & = & \frac{1}{\eta_{PA_m}}\sum_{k}N_k \, p_{k,\alpha} +  I_{m}\left(\frac{1}{\lambda_{m,0}}P_{m,FIX} + \frac{1}{\lambda_{m,1}}P_{m,SYNC} \right) \nonumber \\ 
    & & + D_{m,0} a_m + \left(D_{m,1} a_m \left(\sum_{k}N_k \right) \right) 
    + C \label{eq:ecm}.
\end{eqnarray}
In the best case, the individual rates achieved should exactly correspond to the rate requirements, therefore, we could replace $R_k$ in (\ref{eq:EEm}) by the requirement $\underline{R}_k$. Assuming that the UE to BS assignment has been fixed and equal power is allocated across all the PRBs assigned to a UE associated with the BS $m$, it can be observed that only the first term in (\ref{eq:ecm}) is dependent on the transmit power allocated to the UEs. Therefore, for simplicity, the remaining terms can be dropped and a power minimization problem can be formulated subjected to minimum rate requirement, sum power budget and non-negative power allocation constraints. Analyzing the KKT conditions for the power minimization problem, it can be shown that the transmit power is assigned to each UE to maintain its minimum rate demand\footnote{The proof of the same has been provided in Appendix \ref{sec: replacement_proof}.}.

The idea for improving the UE to BS assignment consists in identifying the BSs that have the lowest EE, switch them off and distribute the formerly assigned UEs to the next best BSs and bands. 

\begin{theorem}
Under the assumption of homogeneous BS with the same load-independent energy consumption model, and with the same PRB allocation and power control strategy, it is energy optimal to switch off BS $m$ with $\ell$ UEs $\mu(m) = \{1,...,\ell\}$ assigned and assign them to their next best BSs $n_1,...,n_\ell$ if the following conditions are satisfied simultaneously
\begin{eqnarray}
    \beta_{k, \alpha}^{-1} + \Delta_l \geq \gamma_l \beta_{k,\alpha'}^{-1} \label{eq:cond}  
\end{eqnarray}
 for all $1 \leq l \leq \ell$, with $m(\alpha') = n_\ell$ and where $\Delta_l$ corresponds to the EE gain\footnote{$\Delta_l$ and $\gamma_l = \frac{\gamma_{2l}}{\gamma_{1l}}$ are computed explicitly for spatial multiplexing (SM) and no spatial multiplexing (NSM). However, due to space constraints, the proof is provided in Appendix \ref{sec: proof_theo1}.}.
\label{theo:1}
\end{theorem}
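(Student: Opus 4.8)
The plan is to compare the total network power consumption in two configurations—before and after switching off BS $m$—and to show that the stated inequality is a sufficient condition for the post-shutdown configuration to consume no more power. Since the UEs not in outage are served at exactly their rate requirement (by the KKT argument for the per-BS power-minimization problem already invoked above the statement), I would first write each UE's transmit power in closed form. Using the rate expression (\ref{eq: datarate}) with equal power across the $N_k$ assigned PRBs and inverting it as in (\ref{eq:power_single_user}), the power spent on UE $l$ at its serving BS takes the form $p_l = \gamma_{1l}\,\beta_{l,\alpha}^{-1}$, where $\gamma_{1l}$ collects the rate-dependent factor $2^{\underline{R}_l/(\bar b N_l)}-1$, the effective array gain $(a_m-|\mu(\alpha)|)$, and the amplifier efficiency $\eta_{PA}$. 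The same UE, if reassigned to its next-best BS $n_l$, would instead require $p_l' = \gamma_{2l}\,\beta_{l,\alpha'}^{-1}$ with $m(\alpha')=n_l$ and an analogous coefficient $\gamma_{2l}$ evaluated in the new configuration.

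Next I would form the energy difference $\Delta P = P_{\mathrm{before}}-P_{\mathrm{after}}$ using the per-BS model (\ref{eq:ecm}). Switching off BS $m$ removes its entire load-independent contribution—the carrier-shutdown term $I_m(\tfrac{1}{\lambda_{m,0}}P_{m,FIX}+\tfrac{1}{\lambda_{m,1}}P_{m,SYNC})$, the RF-chain term $D_0 a_m$, the layer-processing term, and $C$—as well as the load-dependent transmit power $\sum_l N_l\,p_l/\eta_{PA}$ previously spent there. Against these savings I would charge the additional transmit power $\sum_l N_l\,p_l'/\eta_{PA}$ at the receiving BSs, together with any incremental load-independent power they incur if absorbing the new UEs forces them to activate further antennas or MIMO layers (constraint (\ref{eq:ants})). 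The homogeneity assumption ($\eta_{PA_m}=\eta_{PA}$, $D_{m,0}=D_0$, $D_{m,1}=D_1$ across BSs) together with the common PRB/power-control strategy makes these coefficients identical in both configurations, so the bookkeeping reduces to a clean comparison.

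I would then distribute the fixed savings of BS $m$ across its $\ell$ UEs, defining $\Delta_l$ as the per-UE energy gain (the amortized load-independent savings, normalized by $\gamma_{1l}$ so that it is measured in the same units as $\beta^{-1}$), and setting $\gamma_l=\gamma_{2l}/\gamma_{1l}$. With this grouping the difference collapses to $\Delta P \propto \sum_{l=1}^{\ell}\gamma_{1l}\big(\beta_{l,\alpha}^{-1}+\Delta_l-\gamma_l\,\beta_{l,\alpha'}^{-1}\big)$, so requiring each summand to be non-negative—which is exactly (\ref{eq:cond})—guarantees $\Delta P\ge 0$ and hence that the shutdown is energy optimal. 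Because the effective array gain and the multiplexing factor differ between the single-layer and multi-layer regimes, I would compute $\Delta_l$ and $\gamma_l$ separately for the SM and NSM cases, as the footnote anticipates.

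The main obstacle I anticipate is the correct treatment of the discrete, load-independent terms when the UEs are redistributed. Turning off BS $m$ cleanly deletes its fixed cost, but adding UEs to the neighbors can push $\max|\mu(\alpha')|$ up against the antenna constraint (\ref{eq:ants}) and thereby change both $a_{n_l}$ and the array gain $(a_{n_l}-|\mu(\alpha')|)$ that sits inside $\gamma_{2l}$; amortizing such combinatorial, potentially non-separable costs into a clean per-UE quantity $\Delta_l$ is where the argument is most delicate. A secondary difficulty is the inter-cell interference term in (\ref{eq: datarate}), which changes when a UE moves; I would control it by invoking the fixed-interference assumption used in the paper's second main result, so that $\gamma_{1l}$ and $\gamma_{2l}$ remain well-defined constants throughout the comparison.
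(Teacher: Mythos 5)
Your proposal follows the same skeleton as the paper's proof: compare total network power before and after the shutdown, invert the rate constraints to write each UE's transmit power in the form $\gamma_{1l}\,\beta_{k_l,\alpha_l}^{-1}$ (old BS) versus $\gamma_{2l}\,\beta_{k_l,\alpha'_l}^{-1}$ (next-best BS), subtract, and split the aggregate inequality into the per-UE sufficient conditions \eqref{eq:cond}, with separate constants for SM and NSM. However, there is a genuine gap in your bookkeeping: you never charge the increased \emph{transmit} power of the UEs already served by the receiving BSs. The paper's proof makes this cost explicit (its point \emph{b)}: ``users served by $m_2(1),\dots,m_2(\ell)$ need more power''). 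In SM, when UE $k_l$ joins BS $m_2(l)$, every incumbent UE on the shared PRBs loses one spatial degree of freedom, its ZF array gain dropping from $a-K_2(l)$ to $a-K_2(l)-1$; in NSM, the incumbents must surrender PRBs to the newcomer, which raises their per-PRB rate exponent from $\underline{R}/(\bar b\,\alpha')$ to $\underline{R}/(\bar b\,\tilde{\alpha}')$. Each incumbent's power therefore changes by a closed-form increment $\Delta_{k_{2l}} = p'_{k_{2l},\alpha'_{2l}} - p_{k_{2l},\alpha'_{2l}}$, and the per-BS sums $\Delta_{k_2(l)}=\sum_{k\in\mu(m_2(l))}\Delta_k$ enter the paper's $\Delta_l$ with a negative sign, see (\ref{eq: deltasm}) and (\ref{eq: deltansm}). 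Your $\Delta_l$, by contrast, is only the amortized load-independent saving, and your charges against the savings are the moved UEs' own powers plus possible extra load-independent costs at the neighbors; the incumbents' extra load-dependent power appears nowhere in your $\Delta P$. With that definition, non-negativity of each summand in your collapsed sum does not imply $\Delta P\ge 0$; indeed, under your own fixed-interference assumption the incumbents' increments are strictly positive (pure array-gain or PRB loss with no compensating interference reduction), so omitting them can flip the sign of the comparison. You correctly sense this is ``the most delicate part,'' but you defer it rather than resolve it — and the paper's resolution is exactly that the cost \emph{is} separable: each incumbent's increment is computable in closed form and can be absorbed into $\Delta_l$ per moved UE.

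A secondary deviation: the paper does not freeze the interference in this proof. Switching off BS $m$ reduces the interference at every affected UE ($\tilde I_k < I_k$), and this reduction is built into the paper's $\gamma_{2l}$ (e.g.\ $\gamma_{2l}=(a-\ell)(\tilde I_{k_l}+N)$ for SM, and $(2^{\underline{R''_{k_l}}}-1)(\tilde I_{k_l}+N)$ for NSM), partially offsetting the incumbents' losses. Holding interference fixed, as you propose by borrowing the assumption of Theorem~\ref{theo:2}, would still produce a valid (merely more conservative) sufficient condition — but only once the missing incumbent terms are restored, and the resulting constants $\gamma_l,\Delta_l$ would differ from those actually computed in Appendix~\ref{sec: proof_theo1}.
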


\begin{proof}
The proof is based on an inequality chain to bound the loss of EE of the new assigned BS $n$ compared to the EE gain of the switched-off BS $m$ and is provided in Appendix \ref{sec: proof_theo1}.
\end{proof}

The result in Theorem \ref{theo:1} motivates the algorithm to switch off BSs that have low EE, as described in Section \ref{subsubsec:alogs}. Furthermore, comparing the assignment of UEs for different bands, we observe that the SINR distribution over the UEs for different bands shows significant different support. This stems from the higher path losses for higher frequencies. The exact numbers are reported in Section \ref{subsec:simusetup}. The assignment rule in (\ref{eq:bua}) leads to sparsely filled higher frequency bands and crowded lower frequency bands. Therefore, the second idea is to improve the assignment in (\ref{eq:bua}) by adding a large-scale fading bias \cite{6166483} to the higher frequency channels, i.e., 
$$\tilde{\beta}_{k, \alpha} = \beta_{k, \alpha} + \Theta_{b(\alpha)},$$ where $\Theta_{b(\alpha)} \geq 0$ is the bias for the band $b$. The UE assignment is then based on the modified large-scale fading gains $\tilde{\beta}_{k, \alpha}$. However, the PRB assignment and power control is performed afterwards on the true large-scale fading channels $\beta_{k,\alpha}$.

\subsubsection{Algorithms}
\label{subsubsec:alogs}
The derivations above lead to four different algorithms which are implemented for the UE assignment. The first simple baseline algorithm is called the greedy UE assignment algorithm, as shown in Algorithm \ref{greedy}. It is the baseline scheme in (\ref{eq:bua}).

\begin{algorithm}
\caption{Greedy UE assignment}\label{greedy}
\begin{algorithmic}
\For{$i=1..K$}
    \State $\mu(k) \gets \max\limits_{\alpha} \beta_{k,\alpha}$. 
\EndFor
\end{algorithmic}
\end{algorithm}

\vspace{-1.0\baselineskip}
The second algorithm performs the re-matching of the UEs, which is implemented by removing the BSs with the lowest EE from the list of available BSs. Thereby, infinite loops, where a UE is re-matched to the second best (which might serve a single UE or multiple UEs) and then re-matching to the first one is avoided. The pseudo code can be found in Algorithm \ref{rematch}. 

\vspace{-1.0\baselineskip}

\begin{figure}[htbp]
  \begin{minipage}[t]{0.45\textwidth}
    \begin{algorithm}[H]
      \caption{Re-Matching UE assignment} \label{rematch}
      \begin{algorithmic}[1]
        \State \textbf{Run Algorithm \ref{greedy}}
        \While {$\exists m : |\mu(m)|\leq j $} 
        \State $k \gets \mu(m)$ ; $\mathcal{M} \gets \mathcal{M} \setminus m$
        \State $\mu(k) \gets \max\limits_{m(\alpha') \in \mathcal{M}} \beta_{k,\alpha'}$. 
    \EndWhile
      \end{algorithmic}
    \end{algorithm}
  \end{minipage}
  \hfill
  \begin{minipage}[t]{0.45\textwidth}
    \begin{algorithm}[H]
      \caption{Threshold UE assignment} \label{threshold}
      \begin{algorithmic}[1]
        \State \textbf{Run Algorithm \ref{greedy}}
        \While {$\exists m : |\mu(m)| \leq j $} 
            \If {$\frac{\beta_{k,\alpha} - \beta_{k,\alpha'}}{\beta_{k,\alpha'}} < \delta $} 
                \State $k \gets \mu(m)$; $\mathcal{M} \gets \mathcal{M} \setminus m$
                \State $\mu(k) \gets \max\limits_{m(\alpha') \in \mathcal{M}} \beta_{k,\alpha'}$. 
             \EndIf   
            \EndWhile        
      \end{algorithmic}
    \end{algorithm}
  \end{minipage}
  % \caption{Two Algorithms Side by Side}
\end{figure}

\vspace{-1.0\baselineskip}
Algorithm \ref{rematch} is further extended, where we propose a threshold based UE to BS assignment. In Algorithm \ref{threshold}, the UEs are reassigned to a BS only if the ratio $\frac{\beta_{k,\alpha} - \beta_{k,\alpha'}}{\beta_{k,\alpha'}}$, is less than a threshold  $\delta$ and the BS with no UE assigned are switched-off. Through the ratio ($\frac{\beta_{k,\alpha} - \beta_{k,\alpha'}}{\beta_{k,\alpha'}}$) and the threshold ($\delta$), we avoid assignment of the UEs to those BSs that provide weak channel gain resulting in low EE.

% \vspace{-0.3cm}
% \noindent
% \begin{minipage}[t]{.45\linewidth}
% \captionsetup{format=ruled}
% \centering
% \captionof{algorithm}{Re-Matching UE assignment}\label{rematch}
%     \begin{algorithmic}
%     \For{$i=1..K$} 
%         \State $\mu(k) \gets \max\limits_{\alpha} \beta_{k,\alpha}$. 
%     \EndFor
%     \While {$\exists m : |\mu(m)|=j $} 
%         \State $k \gets \mu(m)$
%         \State $\mathcal{M} \gets \mathcal{M} \setminus m$
%         \State $\mu(k) \gets \max\limits_{m(\alpha') \in \mathcal{M}} \beta_{k,\alpha'}$. 
%     \EndWhile
%     \end{algorithmic}
%     \kern2pt\hrule height.8pt\relax
% \end{minipage}
% \hfill    
% \begin{minipage}[t]{.45\linewidth}
% \captionsetup{format=ruled}
% \centering
%  \captionof{algorithm}{Threshold UE assignment}\label{threshold}
%     \begin{algorithmic}
%     \For{$i=1..K$}
%         \State $\mu(k) \gets \max\limits_{\alpha} \beta_{k,\alpha}$. 
%     \EndFor
%     \While {$\exists m : |\mu(m)|=j $} 
%     \If {$\frac{\beta_{k,\alpha} - \beta_{k,\alpha'}}{\beta_{k,\alpha'}} < \delta $} 
%         \State $k \gets \mu(m)$
%         \State $\mathcal{M} \gets \mathcal{M} \setminus m$
%         \State $\mu(k) \gets \max\limits_{m(\alpha') \in \mathcal{M}} \beta_{k,\alpha'}$. 
%      \EndIf   
%     \EndWhile
%     \end{algorithmic}
%     \kern2pt\hrule height.8pt\relax
% \end{minipage}

% \vspace{0.5cm}

The fourth algorithm, as shown in Algorithm \ref{biased}, is the bias algorithm, where the UE to BS assignment is improved by adding a bias\footnote{Note that, as shown in \cite{6166483}, it is possible to formulate a large-scale fading bias model to chose a suitable bias value per BS, however, such a formulation further increases the complexity of the problem under consideration. Therefore, in this work, the bias is heuristically chosen by comparing the channel values of the $700$ MHz and $2.6$ GHz frequency bands. The exact value of the bias is provided in Section \ref{subsec:hetnetres}.} to the higher frequency channels. This not only enables the lower frequency band to be sparsely filled and the higher frequency band to be crowded, but it can also enable the selection of a better serving BS. 

\begin{algorithm}
\caption{Bias UE assignment}\label{biased}
\begin{algorithmic}
\For{$i=1..K$}
    \State $\tilde{\beta}_{k,\alpha} = \beta_{k,\alpha} + \Theta_{b(\alpha)}$
    \State $\mu(k) \gets \max\limits_{\alpha} \tilde{\beta}_{k,\alpha}$. 
\EndFor
\end{algorithmic}
\end{algorithm}

\vspace{-1.0\baselineskip}
Finally, the fifth algorithm is a combination of the bias algorithm with the re-matching, as shown in Algorithm \ref{biasrematch}, where first an artificial bias is injected and then the BSs are switched-off. Similarly, as shown in Algorithm \ref{biasthresh}, UE to BS assignment is also performed after injecting the bias and re-matching based on the threshold.    

\vspace{-1.0\baselineskip}

\begin{figure}[htbp]
  \begin{minipage}[t]{0.45\textwidth}
    \begin{algorithm}[H]
      \caption{Bias and Re-Matching UE assignment} \label{biasrematch}
      \begin{algorithmic}[1]
        \State \textbf{Run Algorithm \ref{biased}}     
        \While {$\exists m : |\mu(m)|=j$} 
            \State $k \gets \mu(m)$ ;  $\mathcal{M} \gets \mathcal{M} \setminus m$ 
            \State $\mu(k) \gets \max\limits_{m(\alpha') \in \mathcal{M}} \tilde{\beta}_{k,\alpha'}$. 
        \EndWhile
      \end{algorithmic}
    \end{algorithm}
  \end{minipage}
  \hfill
  \begin{minipage}[t]{0.45\textwidth}
    \begin{algorithm}[H]
      \caption{Bias and Threshold UE assignment} \label{biasthresh}
      \begin{algorithmic}[1]
        \State \textbf{Run Algorithm \ref{biased}}     
        \While {$\exists m : |\mu(m)|=j$} 
        \If {$\frac{\tilde{\beta}_{k,\alpha} - \tilde{\beta}_{k,\alpha'}}{\tilde{\beta}_{k,\alpha'}} < \delta $} 
            \State $k \gets \mu(m)$ ; $\mathcal{M} \gets \mathcal{M} \setminus m$
            \State $\mu(k) \gets \max\limits_{m(\alpha') \in \mathcal{M}} \tilde{\beta}_{k,\alpha'}$.
        \EndIf
        \EndWhile        
      \end{algorithmic}
    \end{algorithm}
  \end{minipage}
  % \caption{Two Algorithms Side by Side}
\end{figure}

% \vspace{-0.8cm}
% \noindent
% \begin{minipage}[t]{.45\textwidth}
% \captionsetup{format=ruled}
% \captionof{algorithm}{Bias and Re-Matching UE assignment}\label{biasrematch}
%     \begin{algorithmic}
%     \For{$i=1..K$}
%         \State $\tilde{\beta}_{k,\alpha} = \beta_{k,\alpha} + \Theta_{b(\alpha)} $
%         \State $\mu(k) \gets \max\limits_{\alpha} \tilde{\beta}_{k,\alpha}$.
%     \EndFor 
%     \While {$\exists m : |\mu(m)|=j$} 
%         \State $k \gets \mu(m)$
%         \State $\mathcal{M} \gets \mathcal{M} \setminus m$
%         \State $\mu(k) \gets \max\limits_{m(\alpha') \in \mathcal{M}} \tilde{\beta}_{k,\alpha'}$. 
%     \EndWhile
%     \end{algorithmic}
%     \kern2pt\hrule height.8pt\relax
% \end{minipage}
%     \hfill
% \begin{minipage}[t]{.45\textwidth}
% \captionsetup{format=ruled}
% \captionof{algorithm}{Bias and Threshold UE assignment}\label{biasthresh}
%     \begin{algorithmic}
%     \For{$i=1..K$}
%         \State $\tilde{\beta}_{k,\alpha} = \beta_{k,\alpha} + \Theta_{b(\alpha)} $
%         \State $\mu(k) \gets \max\limits_{m(\alpha) \in \mathcal{M},b(\alpha) \in \mathcal{B}} \tilde{\beta}_{k,\alpha}$.
%     \EndFor 
%     \While {$\exists m : |\mu(m)|=j$} 
%     \If {$\frac{\tilde{\beta}_{k,\alpha} - \tilde{\beta}_{k,\alpha'}}{\tilde{\beta}_{k,\alpha'}} < \delta $} 
%         \State $k \gets \mu(m)$
%         \State $\mathcal{M} \gets \mathcal{M} \setminus m$
%         \State $\mu(k) \gets \max\limits_{m(\alpha') \in \mathcal{M}} \tilde{\beta}_{k,\alpha'}$.
%     \EndIf
%     \EndWhile
%     \end{algorithmic}
%     \kern2pt\hrule height.8pt\relax
% \end{minipage}

\subsection{PRB, Spatial Layers and Power Optimization}
\label{subsec:resourceallocation}

Only part of the optimization problem in (\ref{eq:optprob}) is obtained after UE to BS assignment. Allocation of PRBs of BS $m$ to all assigned UEs $k \in \mu(m)$ and power allocation has to be performed, too. It should be noted that the PRB and power allocation is performed for all six UE to BS assignment algorithms proposed in Section \ref{subsubsec:alogs}.   

\subsubsection{Analytical Results}
\label{subsubsec:analyticalresultsallocation}
We make the following assumptions partly based on the current state of the art as well as based on the goal to perform the allocation and control more efficiently. 

At first, all UEs assigned to one BS $m$ obtain the same power allocation, i.e., $p_{k,\alpha} = p_m \; \forall k,\alpha(k)$, because the granularity to adapt at one BS is obtained from the assignment of the number of PRBs. From the results in the preliminary results section (Section \ref{subsec:probstatement}), we know that there exists an optimum number of PRBs $\alpha^*$ assigned to fulfill the rate requirement of the UE $k$. By computing the closed form expression for this number from (\ref{eq: opt_prb_single_user}), we adapt to the large-scale fading and rate requirements of a particular UE.  
Furthermore, we consider only two PRB assignment strategies, namely no spatial multiplexing (NSM) and spatial multiplexing (SM), as illustrated in Figure \ref{fig:PRBassignment}. In NSM, the total number of PRBs available for BS $m$ in band $b$ are optimally distributed among the UEs according to their rate requirement and channel gains. In contrast, via SM each UE obtains a separate spatial layer and can use up to the maximum number of available PRBs, which reduces the spatial degrees of freedom to $(a_m - |\mu(\alpha)|)$.  
% \footnote{The number of active antennas at BS $m$ is denoted by $a_m$ and $|\mu(\alpha)|$ represents the number of UEs.}
\begin{figure}[htp]
    \centering
    \includegraphics[width=0.5\textwidth]{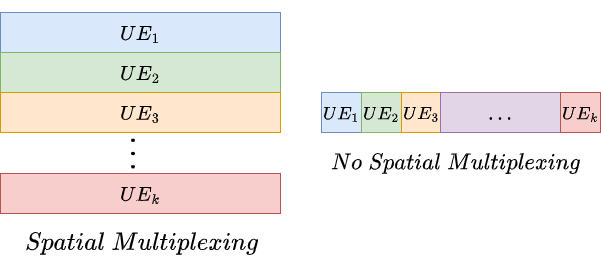}
    \caption{PRB assignment strategies: SM where each UE is assigned a spatial layer and NSM with a single spatial layer.}
    \label{fig:PRBassignment}
\end{figure}

To characterize the proposed PRB allocation and power control strategy, we have the following result, where the sufficient condition for optimal power allocation for NSM and SM is given by  
\begin{eqnarray}
\alpha_1(\tilde{p}_1)\tilde{p}_1 - \alpha_1(p_1^*)p_1^* &\geq& \delta C\eta_{PA}, \; \text{where}, \; \delta = \alpha_1(p_{1}^*) - \alpha_1(\tilde{p}_{1}),
\label{eq: cond_sufficent}
\end{eqnarray}
which states that the difference between load-dependent part with $\tilde{p}$ and $p^*$, i.e., $\alpha_1(\tilde{p}_1)\tilde{p}_1 - \alpha_1(p_1^*)p_1^*$ must be greater than or equal to the load-independent part $\delta C\eta_{PA}$ (Appendix \ref{sec: proof_theo2}), otherwise the load-independent power consumption dominates.

\begin{theorem}
Within a cell applying ZF massive MIMO precoding, for fixed inter-cell interference, if the condition in (\ref{eq: cond_sufficent}) is fulfilled for all $\tilde{p} > p^*$, then, in the case of NSM, the optimal power allocation is obtained by finding the power $p_m^*$ such that the sum of all PRBs assigned to the UEs equals the maximum number of available PRBs $N_m$, i.e., 
\begin{eqnarray}
    \sum_{k \in \mu(m)} \alpha_k^*(p_{m}^*) = N_m \label{eq:cNSM}
\end{eqnarray}
and for SM\begin{change},\end{change} the optimal individual power allocation, is obtained by finding the power $\boldsymbol{p}^* = [p_1^*, \dots, p_k^*]$ such that each UE is assigned the maximum number of available PRBs $N_m$, i.e.,
\begin{eqnarray}
     (\alpha_1^*(p_{1}^*), \dots, \alpha_k^*(p_k^*))_{k \in \mu(m)} = N_m \label{eq:cSM}. 
\end{eqnarray}
\label{theo:2}
\end{theorem}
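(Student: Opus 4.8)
The plan is to reduce the claim to a one–dimensional comparison in the transmit power, exploiting the fact that, once the UE-to-BS assignment and the inter-cell interference are fixed, the number of PRBs a UE needs is pinned down by its rate constraint. Concretely, I would first invert the rate expression (\ref{eq: datarate}) (equivalently (\ref{eq:rate_single_usr})) to write, for each UE $k$, the number of PRBs $\alpha_k(p)=\underline{R}_k/\big(\bar b\log_2(1+(a_m-|\mu(\alpha)|)\beta_{k}p/(1+I))\big)$ required to just meet $\underline{R}_k$ as a function of the per-PRB power $p$, where $I$ is the (fixed) interference. Holding the antenna count $a_m$ fixed, $\alpha_k(\cdot)$ is continuous and strictly decreasing, and the product $f_k(p)=\alpha_k(p)\,p$ is strictly increasing (this follows from $\ln(1+x)>x/(1+x)$ for $x>0$, applied after substituting the SINR argument). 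These two monotonicity facts orient the whole argument: \emph{more} transmit power buys \emph{fewer} PRBs but a strictly \emph{larger} load-dependent cost.

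Next I would identify the feasible set. Substituting $\alpha_k(p)$ into (\ref{eq:ecm}) and discarding terms that do not depend on the PRB count, the BS power becomes, up to an additive constant, $P(p)=\tfrac{1}{\eta_{PA}}f(p)+C\,\alpha(p)$, where in the NSM case $\alpha(p)=\sum_{k\in\mu(m)}\alpha_k(p)$ and $f(p)=p\,\alpha(p)=\sum_k f_k(p)$ under the common power $p=p_m$, and $C$ collects the per-PRB load-independent coefficient (the $D_{m,1}a_m$ term). Because $\alpha(\cdot)$ is continuous and strictly decreasing, the PRB budget constraint (\ref{eq:prbs_const}) is equivalent to $p\ge p^*$, where $p^*$ is the unique power with $\alpha(p^*)=N_m$; that is, using all available PRBs is exactly the smallest feasible power. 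Hence it suffices to show $P$ is minimized at the left endpoint $p^*$ of the feasible interval.

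The crux is then a direct algebraic rearrangement showing that condition (\ref{eq: cond_sufficent}) is nothing but the statement $P(\tilde p)\ge P(p^*)$. Writing $\delta=\alpha(p^*)-\alpha(\tilde p)>0$ for any $\tilde p>p^*$, one has $P(\tilde p)-P(p^*)=\tfrac{1}{\eta_{PA}}\big(f(\tilde p)-f(p^*)\big)-C\,\delta$, which is nonnegative precisely when $\alpha(\tilde p)\tilde p-\alpha(p^*)p^*\ge \delta\,C\,\eta_{PA}$ --- exactly (\ref{eq: cond_sufficent}). Since the hypothesis assumes this holds for every $\tilde p>p^*$, the point $p^*$ (all PRBs used) is the global minimizer over the feasible set, proving (\ref{eq:cNSM}). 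Intuitively, the condition says the transmit power saved by spreading the rate over more PRBs outweighs the extra per-PRB circuit cost, so one should never leave PRBs idle.

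For the SM case I would note that, once each UE is placed on its own spatial layer, the BS power (\ref{eq:ecm}) is \emph{separable} across UEs: $P(\boldsymbol{p})=\sum_{k}\big(\tfrac{1}{\eta_{PA}}f_k(p_k)+C\,\alpha_k(p_k)\big)+\text{const}$, with the spatial-degrees-of-freedom factor $(a_m-|\mu(\alpha)|)$ identical on every PRB when all UEs occupy all PRBs. Applying the one-dimensional argument above to each summand gives $\alpha_k(p_k^*)=N_m$ for every $k$, i.e. (\ref{eq:cSM}). The main obstacle I anticipate is not the scalar optimization but making this separation rigorous: the multiplexing factor $|\mu(\alpha)|$ couples the UEs through $\alpha_k(\cdot)$, so I must verify that the claimed all-PRBs allocation is self-consistent (every PRB carries the full multiplexing set) and that fixing the interference legitimately decouples the per-cell problem from the rest of the network. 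A secondary point to handle carefully is the multi-UE NSM aggregation, where the single shared power $p_m$ makes $\sum_k\alpha_k(\cdot)$ play the role of $\alpha_1$ in (\ref{eq: cond_sufficent}); I would confirm that this sum inherits the monotonicity of its summands so the endpoint argument still applies.
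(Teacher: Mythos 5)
Your proposal is correct and follows essentially the same route as the paper's proof in Appendix~\ref{sec: proof_theo2}: both invert the rate constraint to obtain $\alpha_k(p)$, note that any $\tilde p < p^*$ violates the PRB budget and hence causes outages, and show that for $\tilde p > p^*$ the inequality $P(\tilde p)\geq P(p^*)$ rearranges, via $\delta = \alpha(p^*)-\alpha(\tilde p)$, into exactly condition (\ref{eq: cond_sufficent}), with SM handled by applying the same one-dimensional argument separately to each UE. Your explicit monotonicity observations ($\alpha_k(\cdot)$ strictly decreasing, $\alpha_k(p)\,p$ strictly increasing via $\ln(1+x)>x/(1+x)$) merely make rigorous the uniqueness of $p^*$ and the endpoint structure that the paper's contradiction argument leaves implicit; the core computation is identical.
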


\begin{proof}
\begin{change}The proof demonstrates by contradiction that deviating from the strategies in (\ref{eq:cNSM}) and (\ref{eq:cSM}) cannot save energy and is provided in Appendix \ref{sec: proof_theo2}\end{change}. The proof shows that any other power allocation $\Tilde{p}_{m}$ does not satisfy (\ref{eq:cNSM}) and (\ref{eq:cSM}) for NSM and SM, respectively, either leading to outages, if $\tilde{p}_{m} < p_{m}^{*}$ or leading to a lower EE (due to an increase in the total power consumption of the BS), if $\tilde{p}_{m} > p_{m}^{*}$.

% The proof is done by contra-diction, showing that energy cannot be saved by deviating from the strategies in (\ref{eq:cNSM}) and (\ref{eq:cSM}) and is provided Appendix \ref{sec: proof_theo2}. \begin{change}
%     The proof shows that any other power allocation $\Tilde{p}_{m}$ does not satisfy (\ref{eq:cNSM}) and (\ref{eq:cSM}) for NSM and SM, respectively, either leading to outages, if $\tilde{p}_{m} < p_{m}^{*}$ or leading to a lower EE (due to an increase in the total power consumption of the BS), if $\tilde{p}_{m} > p_{m}^{*}$.
% \end{change}
\end{proof}

\begin{remark}
The sufficient condition in (\ref{eq: cond_sufficent}) depends on the energy consumption parameters. Usually, it is fulfilled, because $D_{m,1}$ is very small and therefore, the costs for using additional PRBs are small compared to the gain in terms of power consumption.      
\end{remark}

\subsubsection{Algorithms}
\label{subsubsec:alogsallocation}
This result directly leads to the Algorithm \ref{NSMpopt}, which is implemented for the assignment of PRBs and power control. It should be noted that the PRB assignment and power control can be executed independently at each cell in a multi-cell network in a decentralized manner \cite{lee1995downlink} because each BS only requires two pieces of local information\begin{change}: the channel gains between itself and its associated UEs and the interference plus noise power at the associated UEs.\end{change} Therefore, each BS can independently perform the PRB allocation and power control decisions for the UEs it serves, taking into account its own local conditions and optimizing its performance based on the available resources and UE requirements, without relying on control from a central entity. By implementing the proposed algorithms in a distributed manner, the computational complexity and the need for extensive inter-cell communication can be avoided. This approach enables scalability and improves the overall efficiency of the multi-cell network. 

% This result directly leads to the Algorithm \ref{NSMpopt}, which is implemented for the assignment of PRBs and power control. \begin{change}
% It should be noted that the PRB assignment and power control can be executed independently at each cell in a multi-cell network in a decentralized manner \cite{lee1995downlink} because each BS only requires two pieces of local information, i.e., the channel gains between itself and its associated UEs and the interference plus noise power at the associated UEs. Therefore, each BS can independently perform the PRB allocation and power control decisions for the UEs it serves, taking into account its own local conditions and optimizing its performance based on the available resources and UE requirements, without relying on control from a central entity. By implementing the proposed algorithms in a distributed manner, the computational complexity and the need for extensive inter-cell communication can be avoided. This approach enables scalability and improves the overall efficiency of the multi-cell network. 
% \end{change} 

\begin{algorithm}
\caption{PRB allocation and power control for NSM and SM}\label{NSMpopt}
\begin{algorithmic}
\State $p_m^0 = p_m$ for all $m \in \mathcal{M}$
\While {$\sum_m |p_{m}^{t+1} - p_{m}^{t}| \neq 0$} 
\For{$m=1..M$}
\If{NSM}
    \State Interference $I_m^t$ at BS $m$ is computed based on the power allocation of the other cells  
    \State $p_m^* \gets p_m^t : \sum_{k \in \mu(m)} \alpha_k^*(p_m^t,I_m^t) = N_m$
\Else
    \State Interference $I_m^t$ at BS $m$ is computed based on the power allocation of the other cells  
    \State $p_m^* \gets p_m^t :  (\alpha_k^*(p_m^t,I_m^t))_{k \in \mu(m)} = N_m$
\EndIf
\EndFor
\EndWhile
\end{algorithmic}
\end{algorithm}

% \vspace{-0.5cm}
The convergence of the Algorithm \ref{NSMpopt} depends on the feasibility of the rate requirements under the power constraints \cite{8641436}. Before the algorithm is started, we check the feasibility of the requested rates using Perron-Frobenius theory \cite{castaneda2013sum}. If feasible, the iterative algorithm converges when the allocated power per BS or UE does not change between the current and previous iterations, i.e., an optimum power allocation is achieved.

\subsection{Computational Complexity of the Proposed Algorithms}

We make a distinction between the computational complexity of the UE to BS assignment (Algorithms 1-6) and the PRB allocation and power control (Algorithm 7).  

A summary of the computational complexity can be found in Table \ref{tab:complexity}, where $K$ and $M$ represent the number of UEs and the number of BSs, respectively.

\begin{table}[htp]
% \captionsetup{labelfont={color=blue},font={color=blue}}
    \centering
    \begin{adjustbox}{width=0.6\textwidth}
    {\begin{tabular}{|l|l|}
    \hline
        Algorithm & Complexity \\ \hline
        Benchmark & $O(K)$ \\ \hline
        Algorithm \ref{greedy} \& \ref{biased} & $O(K)$ \\ \hline
        Algorithm \ref{rematch} \& \ref{biasrematch} & $O(K(M+1))$ \\ \hline
        Algorithm \ref{threshold} \& \ref{biasthresh} & $O(K(M+1))$ \\ \hline
        % Algorithm \ref{NSMpopt} (SM) & $O(T M T_B); T_B = K$\\ \hline
        % Algorithm \ref{NSMpopt} (NSM) & $O(T M T_B);  T_B = \max(n_1, n_2, n_3) \cdot K$\\ \hline
        Algorithm \ref{NSMpopt} & $T M T_B; T_B = \begin{cases}
               \max(n_1, n_2, n_3) * K  & \quad \text{NSM} \\
                 K           & \quad \text{SM } 
            \end{cases}$ \\ \hline
    \end{tabular}}
    \end{adjustbox}
    \caption{Computational Complexity of Algorithms \ref{greedy}-\ref{NSMpopt}.}
    \label{tab:complexity}

\end{table}

\begin{itemize}
    \item \textbf{UE to BS Assignment:} Firstly, the Greedy UE assignment algorithm scales linearly with the number of UEs, and therefore, has the computational complexity of order $K$, i.e. O($K$). Secondly, the RE-Matching UE assignment consists of two steps, where the first step has the the same complexity as Greedy UE assignment, i.e. $O(K)$, and the second step has the complexity $O(MK)$, where $M$ is the number BSs. Therefore, the overall complexity of RE-Matching UE assignment is $O(K(M+1))$. Similarly, the Threshold UE assignment has the complexity of $O(K(M+1))$. Thirdly, the addition of the bias in the Bias UE assignment has negligible computational complexity, and therefore, the Bias UE assignment has the same complexity as the Greedy UE assignment, i.e., $O(K)$. Finally, following the above, the computational complexity of Bias and RE-Matching and Bias and Threshold UE assignment algorithms have similar complexity as Re-Matching and Threshold UE assignment, respectively, i.e. $O(K(M+1))$.

    \item \textbf{PRB Allocation and Power Control:} Let $T$ be the number of iterations required for the algorithm to converge, i.e. the stopping criteria in the while loop. Then, the computational complexity of Algorithm $7$ is $O(TMT_{B})$, where $M$ is the number of BSs. The computational complexity for $T_B$ is dependent on the multiplexing technique. In the case of NSM, we use bisection search to allocate power to the UEs, such that with optimal power $\sum_{k \in \mu(m)} \alpha_k^*(p_{m}^*) = N_m$ is satisfied. Whereas, for SM with fixed interference and the condition  $(\alpha_1^*(p_{1}^*), \dots, \alpha_k^*(p_k^*))_{k \in \mu(m)} = N_m$ at optimal power, we use the achievable rate expression to compute the allocated power per UE. Therefore, the computational complexity of  $T_B = \max(n_1, n_2, n_3) \cdot K$ for NSM, where $n_1$, $n_2$ and $n_3$ represent the number of iterations needed for the three while loops used for bisection search and $T_B = K$ for SM.

\end{itemize}

From the analysis above, it can be seen that our proposed UE to BS assignment, PRB allocation and power control have low complexity, and can be efficiently implemented in a distributed manner. In addition, to compare our proposed distributed PRB allocation and power control algorithm, we propose a benchmark algorithm, where the maximum available power per BS is equally distributed among the UEs assigned and the UEs are spatially multiplexed. We then compute the number of PRBs required to meet the minimum rate requirement of each UE. Any UE requiring more than the maximum available PRBs is considered as an outage. 

The complexity of the proposed benchmark scheme scales linearly with the number of UEs, i.e. $O(K)$, where the computation of the number of PRBs is negligible. However, the performance of such an algorithm is poor in comparison to our proposed algorithm, as shown in Figure \ref{fig:ee2.6network}. Furthermore, to solve the optimization problem in (9) (see manuscript), any direct search method would require an exhaustive search of all possible PRB assignment, followed by finding the optimal power allocation for each such assignment. For such an approach, it is apparent that the complexity is exponential in the number of BSs, UEs and PRBs\cite{6678362}.

\section{Numerical Results}
\label{sec:numericalresults}
All network simulations \begin{change}were conducted\end{change} using {SiMoNe} \cite{Rose16}, which is designed to simulate complex wireless communication scenarios as realistically as possible.

\subsection{Simulation Setup}
\label{subsec:simusetup}
\begin{change}Varying network structures were implemented, and multiple simulation parameters were considered. The metrics measured for coverage optimization include reference signal received power (RSRP) and SINR. In LTE networks, RSRP is used to measure the UE's received signal strength over the common reference signal, with a fixed threshold of (RSRP $> -115$ dBm) to measure the network's coverage quality. SINR measures the the UE signal quality, and a threshold of (SINR $> -6.5$dB) was set to measure the outage probability.\end{change} 

% In addition, the metrics describing the macroscopic scenario simulation include the location, antennas and subscribers/UEs. An outdoor scenario of the size ($7 * 13$) km was created in the city center of Berlin - Germany, where, 3D building data was taken from the actual city data along with realistic antenna deployment across the city - derived from an actual network antenna density distribution.  Macroscopic antennas were deployed within the scenario, where, the number of antennas is dependent on their frequencies (the higher the frequency, the more antennas were inserted). Multiple antenna deployment were tested to maintain a threshold of $RSRP_{ue} < - 115$ dBm and to ensure an efficient antenna deployment, energy-wise. The layout of the antennas  was fixed to a $3$ sector antenna per site with azimuth angles of (0, 120, 240)°. In addition, different mechanical tilts of (2, 4, 6)° for the antenna main lobe were tested following the antenna deployment procedure. The metrics were measured and the deployment was calibrated based on the results. Furthermore, UEs were deployed homogeneously across the map to simulate outdoor subscribers.
In addition, the metrics describing the macroscopic scenario simulation include the location, antennas and subscribers/UEs. An outdoor scenario of the size ($7 * 13$) km was created in the city center of \begin{change}Berlin, Germany. 3D building data was taken from the actual city data, and realistic antenna deployment was implemented across the city based on an actual network antenna density distribution.\end{change}  

Macroscopic antennas were deployed within the scenario, \begin{change}where the number of antennas is dependent on their frequencies; the higher the frequency, the more antennas were inserted. Multiple antenna deployment were tested to maintain a threshold of $RSRP_{ue} < - 115$ dBm and ensure an efficient antenna deployment from an energy perspective. The antenna layout was fixed to a $3$-sector antenna per site with azimuth angles of (0, 120, 240)°. Additionally, different mechanical tilts of (2, 4, 6)° for the antenna main lobe were tested following the antenna deployment procedure. The metrics were measured, and the deployment was calibrated based on the results. Furthermore, UEs were deployed homogeneously across the map to simulate outdoor subscribers.\end{change}

\begin{figure}[htp]
     \centering
     \begin{subfigure}[b]{0.47\textwidth}
         \centering
         \fbox{\includegraphics[width=0.88\linewidth,height=5.0cm]{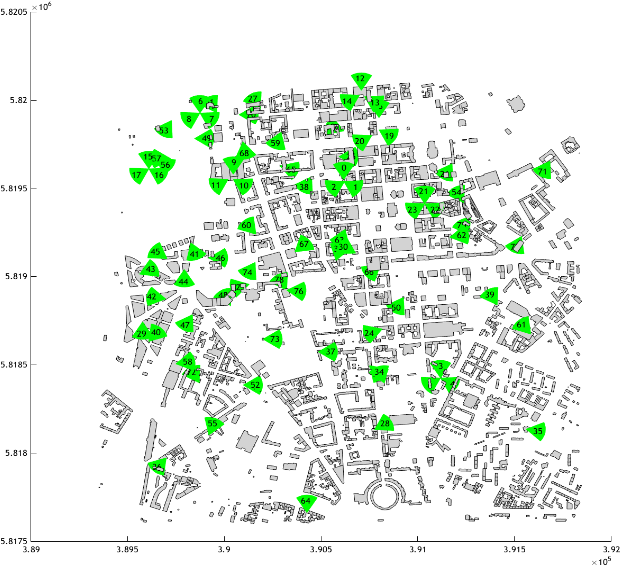}}
         \caption{700 MHz}
         \label{fig:700MHz_AntennaDistribution}
     \end{subfigure}
     \hfill
     \begin{subfigure}[b]{0.49\textwidth}
         \centering
         \fbox{\includegraphics[width=0.9\linewidth,height=5.0cm]{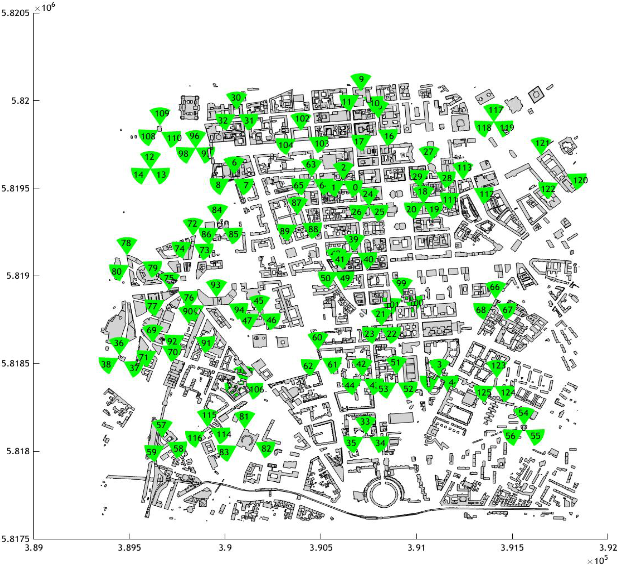}}
         \caption{2.6 GHz}
         \label{fig:2.6GHz_AntennaDistribution}
     \end{subfigure}
     \hfill
        \caption{Antennas distributed based on a realistic antenna distribution in downtown Berlin.}
        \label{fig:Antenna Distribution}
\end{figure}

To obtain sufficient data from the simulation, multiple scenarios with two different frequencies, i.e., $700$ MHz and $2.6$ GHz,  were tested. Table \ref{tab:700MHZ_2.6GHzParameters} shows the evaluation parameters for the Berlin scenario. For $700$ MHz band (see Figure \ref{fig:700MHz_AntennaDistribution}), $13$ outdoor BS were deployed since lower frequencies travel greater distances and provide higher coverage, which is reflected in Figure \ref{fig:700MHZ_RSRP&SINR}. Furthermore, it was observed that a mechanical tilt of $2^{\circ}$ provides the best results. Therefore, the RSRP values at this mechanical tilt were selected. 

\begin{table}[htp]
\centering
    \begin{tabular}{| p{.3\linewidth} | p{.6\linewidth} |  }
    \hline
   \rowcolor{lightgray} \textbf{Parameters} & \textbf{Values} \\ \hline
    Carrier Frequency & 700 MHz, 2.6 GHz  \\ \hline
    Antenna Type   & Directional Antenna   \\ \hline
    Antenna Gain (dBi) & 8 for 700 MHz, 18.6 for 2.6 GHz \\ \hline
    Cell Layout  & 700 MHz - 13 sites, 2.6 GHz - 42 sites; 3 sectors per site   \\ \hline
    Bandwidth (MHz) & 10 for 700 MHz, 100 for 2.6 GHz   \\ \hline
    Antenna Diagram  & 3GPP specified for 700 MHz, Commscope-HWXX-6516DS-VTM2600 for 2.6 GHz   \\ \hline
    Antenna Transmit Power (dBm) & 46 \\ \hline
    Tilt Type & Mechanical Tilt  (2, 4, 6)°  \\ \hline
    Effective Isotropic Radiated Power (dBm) & 63.5 \\ \hline
    BS Antenna Height (m) & Different Heights depending on building data with $\max_h = 59,356 $ and $\min_h = 5 $\\ \hline
    Number of  UEs  & 195   \\ \hline
    UE Transmit Power (dBm) & 23 \\ \hline
    UE Distribution & Homogeneous \\ \hline
    UE Height (m) & 1.5  \\ \hline
    UE Type \& Mobility & Pedestrian - Stationary  \\ \hline  
   
    \end{tabular}
    \caption{700 MHz and 2.6 GHz simulation parameters.}
    \label{tab:700MHZ_2.6GHzParameters}

\end{table}

\begin{figure}[htp]
\begin{subfigure}{.5\textwidth}
  \centering
  \fbox{\includegraphics[width=.8\linewidth, height = 4cm]{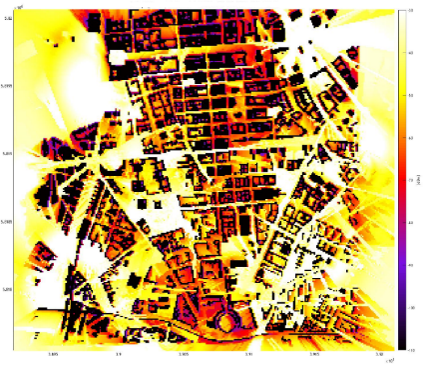}}
  \caption{ Reference Signal Received Power }
  \label{fig:sfig1}
\end{subfigure}%
\begin{subfigure}{.5\textwidth}
  \centering
  \fbox{\includegraphics[width=.8\linewidth, height = 4cm]{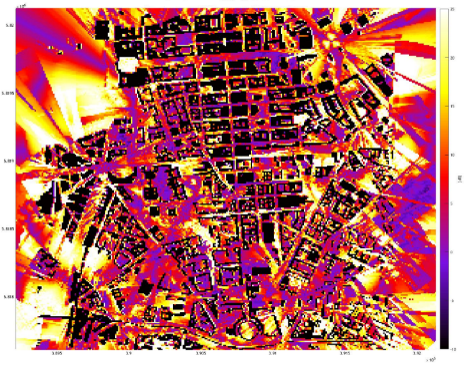}}
  \caption{Signal to Interference and Noise Ratio }
  \label{fig:sfig2}
\end{subfigure}
\caption{Prediction of RSRP and SINR with 13 BS distribution scenario in the 700 MHz frequency.}
\label{fig:700MHZ_RSRP&SINR}
\end{figure}

Similarly, for the 2.6 GHz band, the number of outdoor sector antennas \begin{change}was\end{change} increased to $42$ (see Figure \ref{fig:2.6GHz_AntennaDistribution}) to ensure a higher coverage, which is reflected in Figure \ref{fig:2.6GHZ_RSRP&SINR}. In addition, Figure \ref{fig:2.6GHz_CDF} shows the CDF of the RSRP values at different mechanical tilts, where,  mechanical tilt = $2^{\circ}$ provided the best results and, therefore, the RSRP values were selected based on this tilt.

\begin{figure}[htp]
\begin{subfigure}{.5\textwidth}
  \centering
  \fbox{\includegraphics[width=.8\linewidth, height = 4cm]{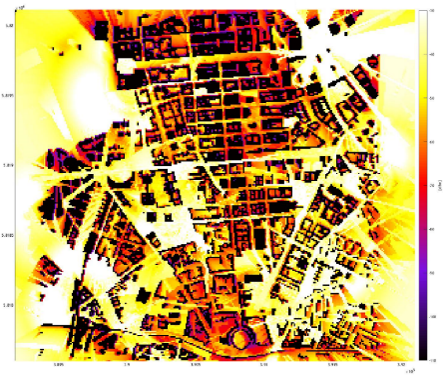}}
  \caption{ Reference Signal Received Power }
  \label{fig:sfig_1}
\end{subfigure}%
\begin{subfigure}{.5\textwidth}
  \centering
  \fbox{\includegraphics[width=.8\linewidth, height = 4cm]{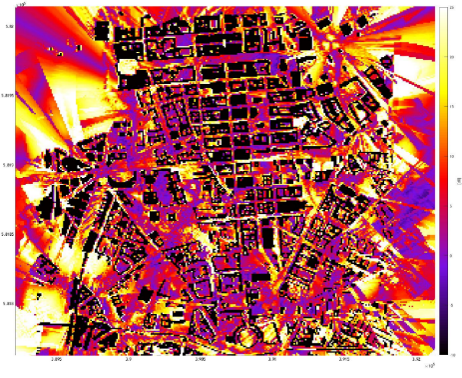}}
  \caption{Signal to Interference \& Noise Ratio }
  \label{fig:sfig_2}
\end{subfigure}
\caption{Prediction of RSRP and SINR with 42 BS distribution scenario in the 2.6 GHz frequency.}
\label{fig:2.6GHZ_RSRP&SINR}
\end{figure}

\begin{figure}[H]
    \centering
    \includegraphics[width=.52\linewidth]{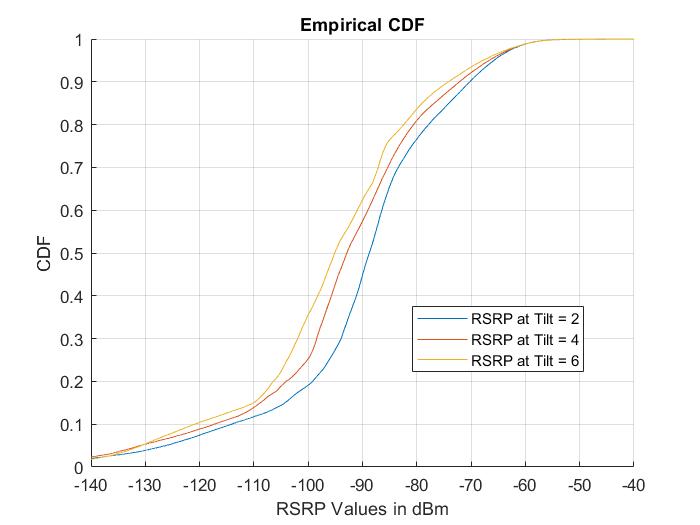}
    \caption{CDF of the 2.6 GHz band at three different mechanical tilts.}
    \label{fig:2.6GHz_CDF}
\end{figure}

\subsection{Convergence of the Proposed Distributed Algorithm}
\label{subsec:algo_convergence}
In Section \ref{subsubsec:alogsallocation}, we mention that the convergence of the proposed distributed Algorithm \ref{NSMpopt} is related to the power constraints. \begin{change}Figure \ref{fig:convergence_plot} verifies this statement, showing the number of iterations required for convergence. The y-axis represents the power consumed by the BS. The two curves can be distinguished based on the initial power assignment, where maximum power initialization represents that the maximum available transmit power is equally split among each assigned UE and, with zero power initialization each UE assigned to the BS is initialized with zero transmit power. \end{change}

We observe that the algorithm converges within $6$ or $7$ iterations in this case, depending on the targeted accuracy level. However, with a slightly relaxed accuracy level, the algorithm can converge in $3$ or $4$ iterations, \begin{change}as\end{change} inferred from the figure.             

% In Section \ref{subsubsec:alogsallocation}, we mention that the convergence of the proposed distributed Algorithm \ref{NSMpopt} is related to the power constraints. Figure \ref{fig:convergence_plot} verifies this statement, showing the number of iterations required for convergence, where depicted on y-axis is the power consumed by the BS. The two curves can be distinguished based on the initial power assignment, where maximum power initialization represents that the maximum available transmit power is equally split among each assigned UE and with zero power initialization each UE assigned to the BS is initialized with zero transmit power. We observe that the algorithm converges within $6$ or $7$ iterations in this case, depending on the targeted accuracy level. However, with a slightly relaxed accuracy level, the algorithm can converge in $3$ or $4$ iterations, which can be inferred from the figure.             
\begin{figure}[htp]
\centering
\includegraphics[width=.45\linewidth]{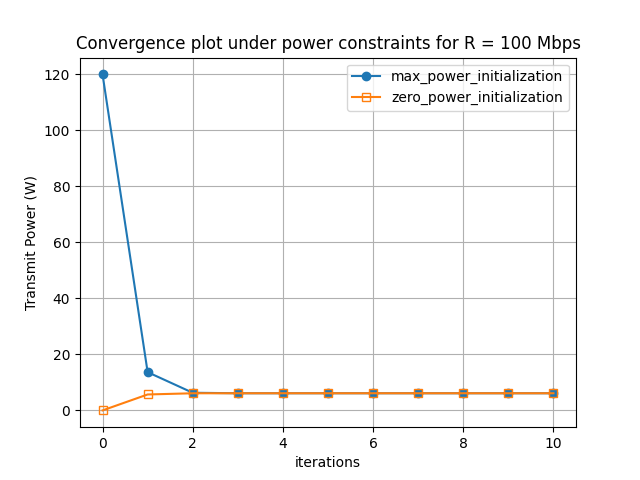}    
\caption{Convergence of the proposed distributed Algorithm 7.}
\label{fig:convergence_plot}
\end{figure}

\subsection{Simulation Results - Individual Bands}
\label{subsec:singlecellres}
In this section, we \begin{change}present\end{change} the simulation results for the setup explained in Section \ref{subsec:simusetup}, considering one active band and $195$ UEs. \begin{change}Additionally\end{change}, we assume that a $100$ Megabit file is to be transferred, and the maximum number of available antennas are used per BS. Furthermore, each UE has the same minimum rate requirement. 

\begin{table}[htp]
    \centering
    \begin{adjustbox}{width=0.6\textwidth}
    \begin{tabular}{|c|c|c|c|c|c|}
    \hline
        $D_0$ & $D_1 - 700$ MHz & $D_1 - 2.6$ GHz & $P_{FIX}$ & $P_{SYNC}$ & $\eta_{PA}$ \\ \hline
        4.49 W & 0.00312 W & 0.01560 W & 300.0 W & 34.0 W & 0.48 \\ \hline
    \end{tabular} 
    \end{adjustbox}
    \caption{Parameters for the computation of energy consumption.}
    \label{tab: sys_para1}
\end{table}

Tables \ref{tab: sys_para1} and \ref{tab: sys_para2} \begin{change}display\end{change} the value of the parameters used throughout the simulations. In Table \ref{tab: sys_para2},  $P^{max}$ \begin{change}represents\end{change} the maximum transmit power per BS,  $M$ \begin{change}denotes\end{change} the number of BSs, and $RL$ and $SL$ are the abbreviations used to represent the location of the BSs. \begin{change}Here $RL$ denotes the refreshed (different) locations, while $SL$ means same locations\footnote{\color{blue}Please note that the locations of BSs in $2.6$ GHz band were fixed, while the locations of BSs in $700$ MHz band were changed. Therefore, the locations of BSs in $700$ MHz are relative to the locations of BSs in $2.6$ GHz band. In this context, 'refreshed location' indicates that BS locations in the 700 MHz band differ from those in the 2.6 GHz band.}.\end{change} 
\begin{table}[htp]
    \centering
    \begin{adjustbox}{width=0.6\textwidth}
    \begin{tabular}{|c|c|c|c|c|c|c|}
    \hline
        ~ & ${A}_m^b$ & $N_{m}^b$  & $P^{max}$ & $\bar{b}$ & M - RL & M - SL \\ \hline
        700 MHz & 4 & 100 & 200 W & 180 KHz & 39 & 126 \\ \hline
        2.6 GHz & 64 & 273 & 120 W & 360 KHz & 126 & 126 \\ \hline
    \end{tabular}
    \end{adjustbox}
    \caption{Parameters highlighting the maximum number of available resources.}
    \label{tab: sys_para2}
\end{table}

% \begin{change}
%     In addition, to compare our proposed PRB allocation and distributed power control algorithm, we propose a benchmark algorithm, where the maximum available power per BS is equally distributed among the UEs assigned and the UEs are spatially multiplexed. We then compute the number of PRBs required to meet the minimum rate requirement of each UE. Any UE requiring more than the maximum available PRBs is considered as an outage. We demonstrate the performance of the benchmark for $2.6$ GHz band only, showing that our proposed distributed algorithm outperforms the benchmark algorithm.  
% \end{change}    

\subsubsection{Simulation Results for $700$ MHz Band}
\label{subsubsec:700MHz}
Figure \ref{fig: cdf_ues_700} \begin{change}illustrates\end{change} the distribution of UEs among the BSs using the greedy approach described in Algorithm \ref{greedy}. \begin{change}About $10\%$ of the BSs have no UEs assigned when the BSs in $700$ MHz band are located differently (RL), while more than $60\%$ of the BSs have no UEs assigned when BSs in $700$ MHz band are located at the same positions (SL). This is due to the variation in the number of BSs in the $700$ MHz band in the two scenarios. Figure \ref{fig:outages_700} further illustrates the impact of having fewer BSs. A lower total number of BSs in the network results in more outages due to highly congested BSs, leading to a lower EE.\end{change}

% Figure \ref{fig: cdf_ues_700} shows the number of UEs assigned to the BSs based on the greedy approach described in Algorithm \ref{greedy}. The figure shows that approximately $10\%$ of the BSs are assigned no UEs when the BSs in $700$ MHz band are placed at a different locations (RL). On the contrary, more than $60\%$ of the BSs are assigned no UEs when BSs in $700$ MHz band are placed at the same locations (SL). This results from the difference between the number of the BSs in 700 MHz band in the two cases. The impact of lower number of BSs is further reflected in Figure \ref{fig:outages700}, showing that a lower number of total BSs in the network leads to more outages resulting from very crowded BSs and therefore, a lower EE is achieved.  
    
\begin{figure}[htp] 
\centering
\includegraphics[width=0.45\linewidth]{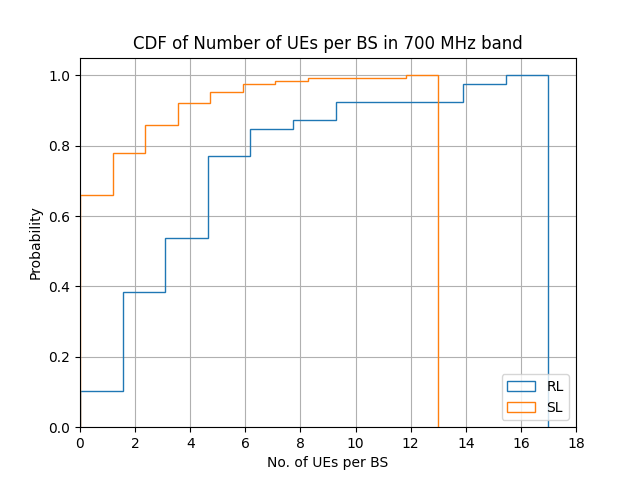}
\caption{CDF of number of active UEs in 700 MHz: RL - different BS location and SL - same BS location relative to BSs in 2.6 GHz band.}
\label{fig: cdf_ues_700}
\end{figure}

\begin{figure}[htp]
\centering
    \begin{subfigure}{0.45\linewidth}
    \centering
    \includegraphics[width=\textwidth]{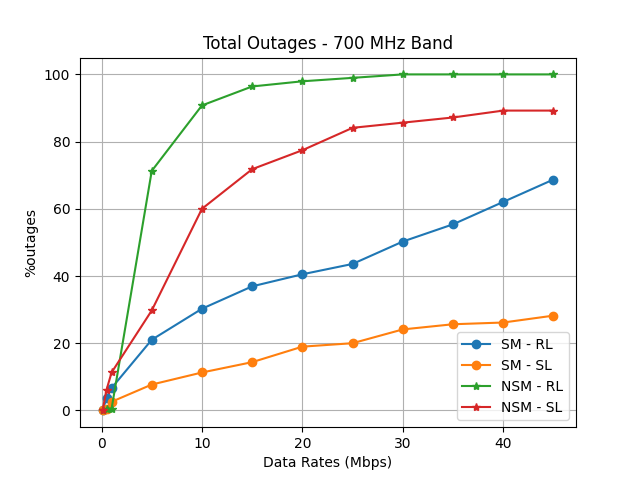}
    \caption{Overall outages.}
    \label{fig:outages_700}
    \end{subfigure}
    \begin{subfigure}{0.45\linewidth}
    \centering
    \includegraphics[width=\textwidth]{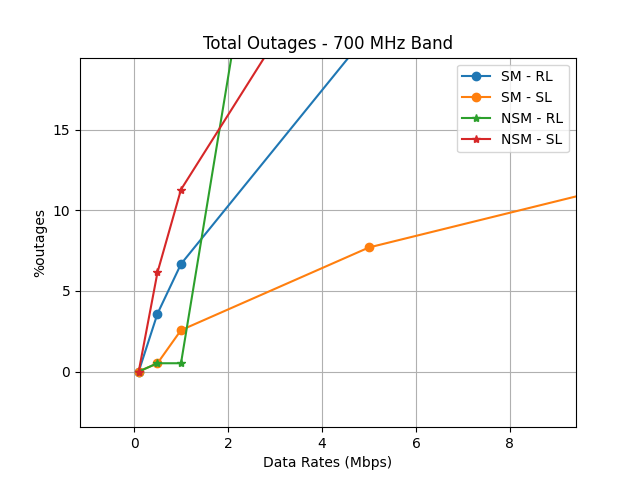}
    \caption{Outages for low data rates.}
    \label{fig:outages_700_zoomed}
    \end{subfigure}
\caption{
Outages over different rate requirements in 700 MHz band: RL - different BS location and SL - same BS location relative to BSs in 2.6 GHz band.}
\label{fig:outages700}
\end{figure} 

\begin{figure}[htp]
\centering
    \begin{subfigure}{0.45\linewidth}
    \centering
    \includegraphics[width=\textwidth]{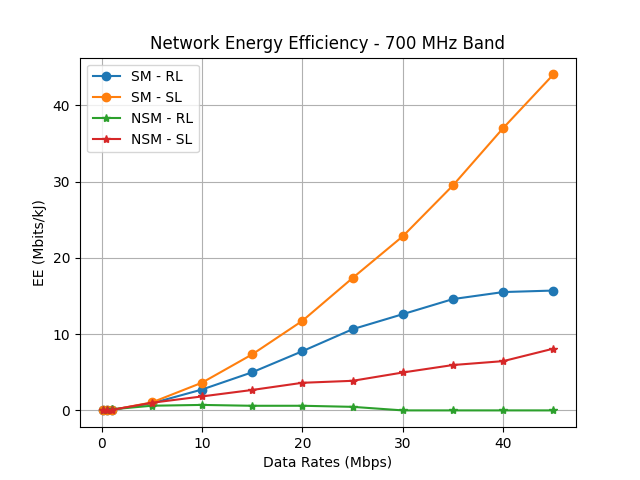}
    \caption{Overall EE of the network.}
    \label{fig:ee_700}
    \end{subfigure}
    \begin{subfigure}{0.45\linewidth}
    \centering
    \includegraphics[width=\textwidth]{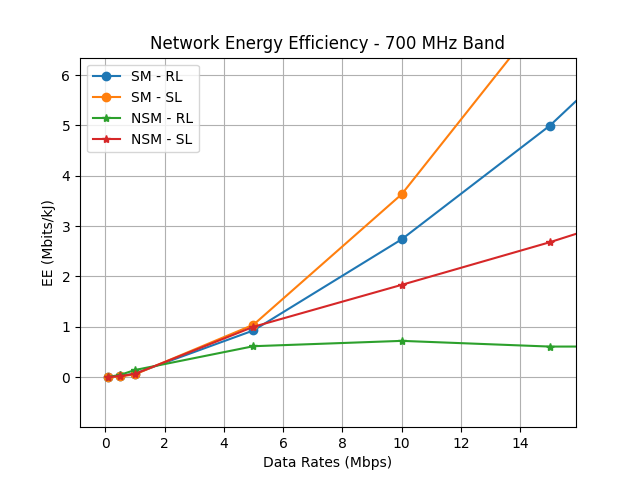}
    \caption{EE of the network for low data rates.}
    \label{fig:ee_700_zoomed}
    \end{subfigure}
\caption{
EE in $700$ MHz band: RL - different BS location and SL - same BS location relative to BSs in 2.6 GHz band.}
\label{fig:ee700}
\end{figure}
Figure \ref{fig:ee_700} shows the EE of the network. \begin{change}Firstly, it is evident that spatially multiplexing UEs achieve higher EE for large data rates, while a single spatial layer results in higher EE for small data rates.\end{change} Such a behavior results from the difference in the number of outages between SM and NSM (see Figure \ref{fig:outages_700}) and also aligns with the conclusions mentioned in \cite{Marwaha2022}. Secondly, low EE is achieved for lower rates, which increases as the rates increase. \begin{change}This results from the longer time required to transmit $100$ Megabits of data for lower data rates, leading to higher power consumption and, consequently, reduced EE.\end{change} Finally, as depicted in Figure \ref{fig:outages_700}, an increase in the number of outages is observed for higher rates. \begin{change}Since the sum rate computation does not consider the impact of outages, the total EE should theoretically be zero in the case of a $100\%$ outage.\end{change} Such a behavior can be observed for NSM (with BSs in RL) when the data rate is greater than $35$ Mbps. Furthermore, for NSM, when the number of UEs to be assigned to BSs is much larger than the number of available BSs and each UE requires a large rate, a higher percentage of outages are observed because the limited available PRBs- to be distributed among the UEs- are insufficient to satisfy the minimum rate requirement. Thus, indicating that higher rates cannot be supported in $700$ MHz frequency band.   

\subsubsection{Simulation Results for $2.6$ GHz Band} \label{subsubsec:2.6GHz} Figure \ref{fig: cdf_ues_2.6} shows the distribution of UEs over BSs and Table \ref{tab: uetobsassignment} explains the meaning of the labels in the legend. It can be seen that more than $60\%$ of the BSs are assigned no UEs in all assignment techniques. Since BSs with only one UE are turned-off in the Rematched 1 assignment, BSs are always assigned two or more UEs. Similarly, three or more UEs are assigned to a BS via Rematched 2 assignment.    
    \begin{table}[htp]
    \centering
    \begin{adjustbox}{width=0.6\textwidth}
    \begin{tabular}{|c|c|}
    \hline
        Baseline & Greedy approach  \\ \hline
        Rematched 1 & BSs with 1 UE switched-off \\ \hline
        Rematched 2 & BSs with 1 and 2 UEs switched-off \\ \hline
        Threshold 1 and 2 & BSs switched-off based on a threshold value \\ \hline
    \end{tabular} 
    \end{adjustbox}
    \caption{Naming of UE to BS assignment techniques.}
    \label{tab: uetobsassignment}
\end{table}

\begin{figure}[htp]
\centering
    \begin{subfigure}{0.45\linewidth}
    \centering
    \includegraphics[width=\textwidth]{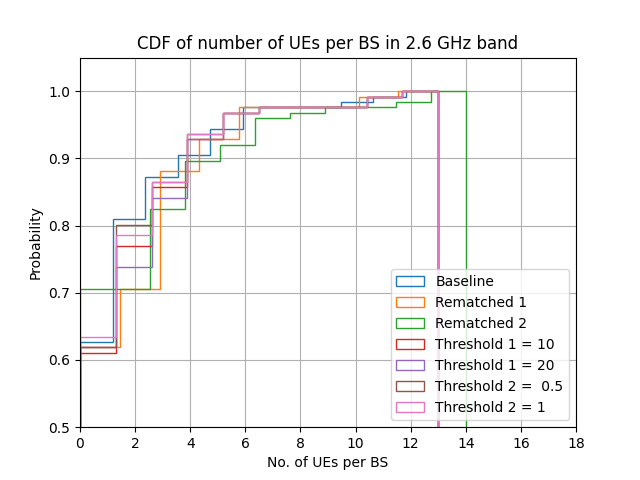}
    \caption{CDF of number of active UEs in 2.6 GHz.}
    \label{fig: cdf_ues_2.6}
    \end{subfigure}
    \begin{subfigure}{0.45\linewidth}
    \centering
    \includegraphics[width=\textwidth]{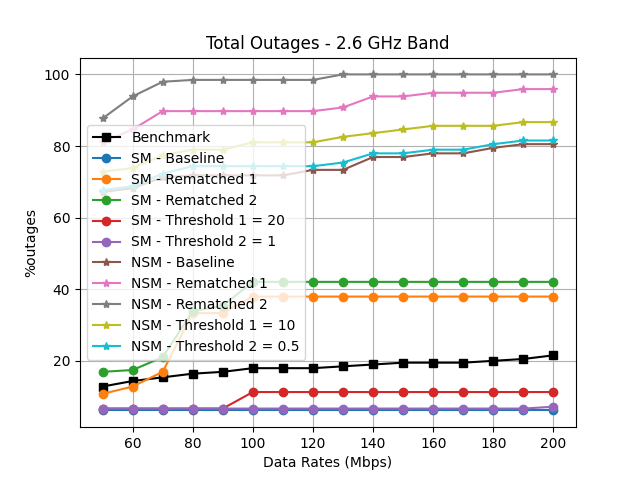}
    \caption{Outages over different rates in 2.6 GHz band.}
    \label{fig: outage2.6}
    \end{subfigure}
\caption{UE distribution over BSs and outages in $2.6$ GHz band.}
\label{fig:cdfoutages2.6}
\end{figure}    

Figure \ref{fig:ee2.6network} shows the changes EE over different rates. Firstly, note that higher rates can be supported in $2.6$ GHz band. Secondly, the EE improves as the rates increase, resulting from less amount of time required to transmit $100$ Mbits of data and lower number of outages (see Figure \ref{fig: outage2.6}). It is apparent that SM significantly outperforms NSM, achieving approximately $3-5$ times larger EE for larger data rates, where, even the benchmark scheme outperforms NSM. Thirdly, among the threshold based UE to BS assignment techniques, it is observed that a higher threshold value, i.e. $20$, results in a higher EE for SM compared to a lower threshold. For NSM, a lower threshold value, i.e. $0.5$, yields a higher EE, thus, indicating that the choice of the threshold depends on the multiplexing scenario. Finally, via the proposed threshold based re-matching algorithm, it is clearly evident that simply switching-off BSs with one and two UEs is not sufficient to yield large EE. A straightforward switch-off of BSs with an arbitrary number of UEs (in these simulations BSs with one and two UEs) can lead to the assignment of UEs to BSs with weak channel gain, thus, resulting in more outages and a lower EE. Therefore, to achieve the highest EE, UEs should be spatially multiplexed and re-matched to the BSs using a carefully selected threshold value. 

\begin{figure}[htp]
\centering
    \begin{subfigure}{0.45\linewidth}
    \centering
    \includegraphics[width=\textwidth]{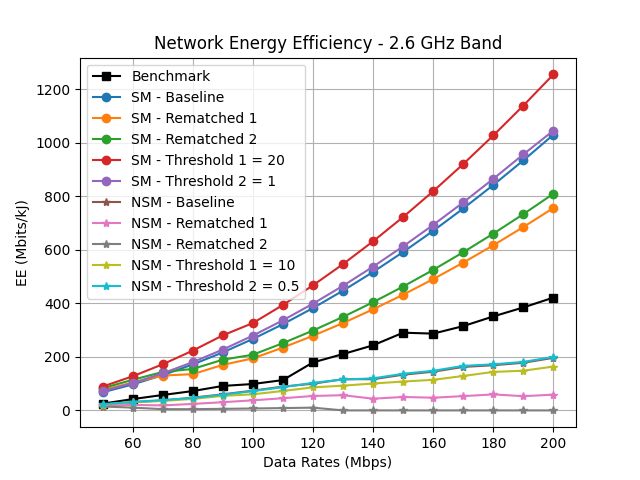}
    \caption{Overall EE of the network.}
    \label{fig:ee2.6}
    \end{subfigure}
    \begin{subfigure}{0.45\linewidth}
    \centering
    \includegraphics[width=\textwidth]{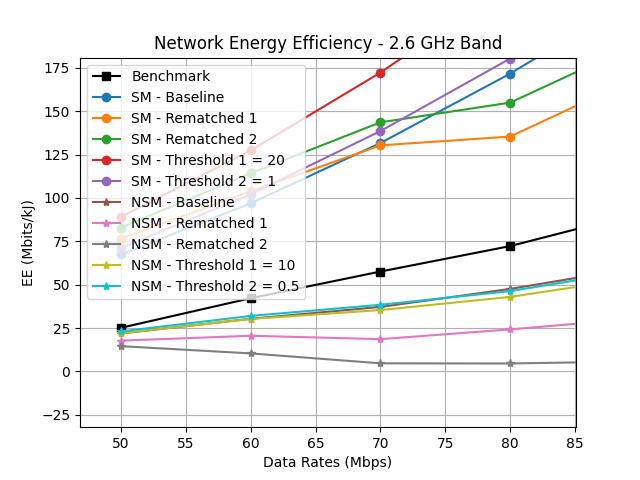}
    \caption{EE of the network for smaller rates.}
    \label{fig:ee2.6zoomed}
    \end{subfigure}
\caption{Total EE of the Network in $2.6$ GHz band.}
\label{fig:ee2.6network}
\end{figure}

Figures \ref{fig:ee2.6ant} and \ref{fig:out} show the impact of the number of active antennas/RF chains on the EE and the percentage of outages of the network, respectively, when the UEs are spatially multiplexed. It can be observed that the number of active antennas/RF chains affects both the EE and the percentage of outages. A lower number of active antennas/RF chains lead to a lower EE because the number of outages are large. As the number of active antennas/RF chains increases, the number of outages decrease and the EE improves, where proportionality between number of active antennas/RF chains and number of outages is observed. However, if the number of active antennas/RF chains grows really large (exemplified by the case with $64$ antennas) the load-independent power consumption begins to dominate, resulting in a lower EE. Furthermore, it should be noted that for each active antenna/RF chain value in Figutre \ref{fig:ee2.6ant}, the UEs are assigned all available PRBs. The results from Figure \ref{fig:ee2.6ant} suggest that from an EE perspective it is better to assign all PRBs before turning on additional antennas/RF chains.           

\begin{figure}[htp]
\centering
    \begin{subfigure}{0.4\linewidth}
    \centering
    \includegraphics[width=\textwidth]{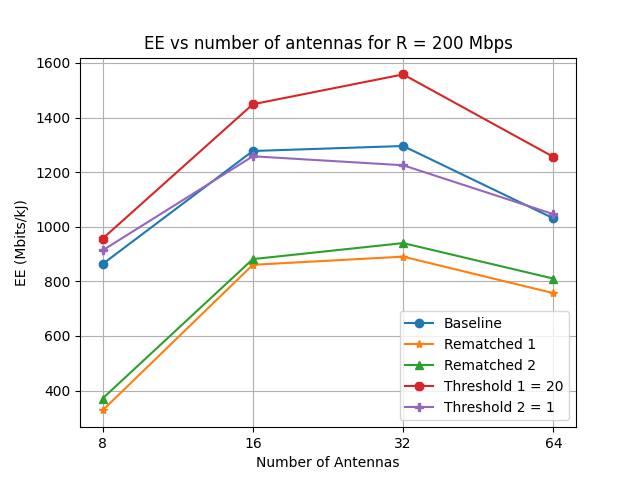}
    \caption{Overall EE of the network.}
    \label{fig:ee2.6ant}
    \end{subfigure}
    \begin{subfigure}{0.4\linewidth}
    \centering
    \includegraphics[width=\textwidth]{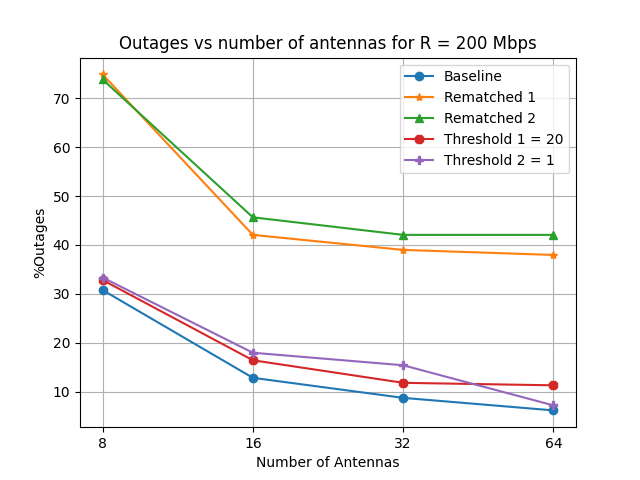}
    \caption{Overall outages of the network.}
    \label{fig:out}
    \end{subfigure}
\caption{Total EE and outages of the network in $2.6$ GHz band.}
\label{fig:ee_ants}
\end{figure}

\subsection{HetNet Simulation Results - Multiple Bands}
\label{subsec:hetnetres}
Shown below are the simulation results for the setup explained in Section \ref{subsec:simusetup} for two active bands with a total of $195$ UEs. We assume that a $100$ Megabits of data is to be transferred, maximum number of available antennas are used per BS and each UE has the same rate requirement. Furthermore, Tables \ref{tab: sys_para1} and \ref{tab: sys_para2} show the system parameters used for the simulations. 

\begin{figure}[htp]
    \centering
        \begin{subfigure}{0.43\linewidth}
        \centering
        \includegraphics[width=\textwidth]{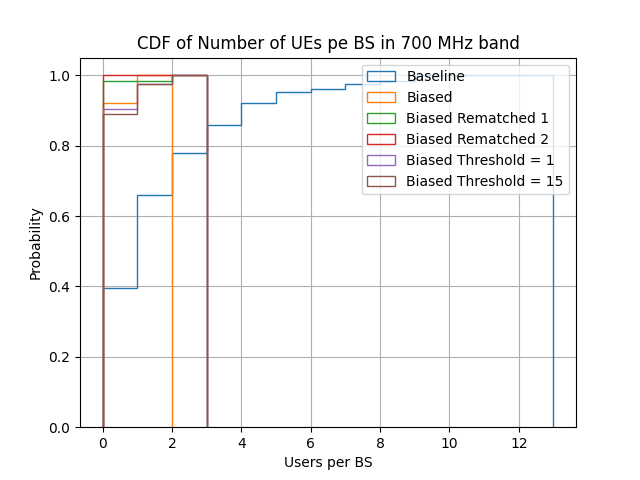}
        \caption{CDF of number of active UEs in $700$ MHz band.}
        \label{fig:cdf700mb}
        \end{subfigure}
        \begin{subfigure}{0.43\linewidth}
        \centering
        \includegraphics[width=\textwidth]{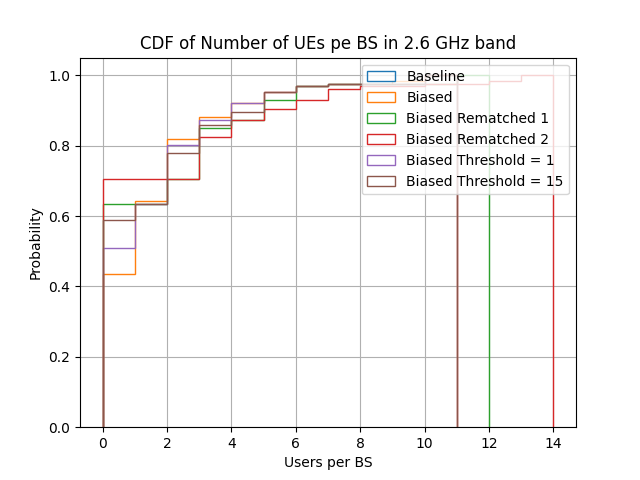}
        \caption{CDF of number of active UEs in $2.6$ GHz band.}
        \label{fig:cdf2.6mb}
        \end{subfigure}
    \caption{Distribution of UEs in a HetNet scenario for different UE to BS matching techniques.}
    \label{fig:cdfsmb}
\end{figure}

\begin{figure}[htp]
    \centering
        \begin{subfigure}{0.43\linewidth}
        \centering
        \includegraphics[width=\textwidth]{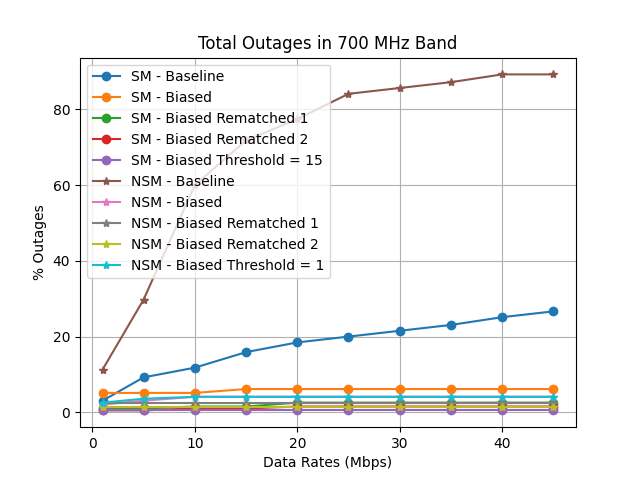}
        \caption{Outages in $700$ MHz band.}
        \label{fig:out700mb}
        \end{subfigure}
        \begin{subfigure}{0.43\linewidth}
        \centering
        \includegraphics[width=\textwidth]{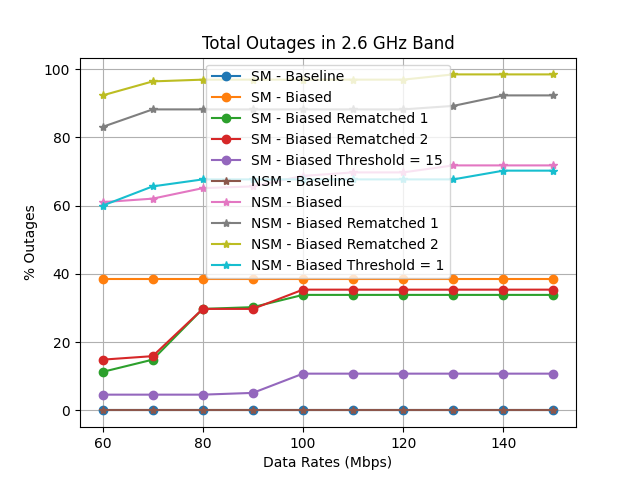}
        \caption{Outages in $2.6$ GHz band.}
        \label{fig:out2.6mb}
        \end{subfigure}
    \caption{Outages in a HetNet scenario for different data rates.}
    \label{fig:outsmb}
\end{figure}

Figures \ref{fig:cdf700mb} and \ref{fig:cdf2.6mb} show the distribution of the UEs over the BSs based on the UE to BS assignment algorithms described in Section \ref{subsec:uebsassignment}. It can be observed that the (greedy) UE to BS assignment rule in (\ref{eq:bua}) leads to a sparsely filled $2.6$ GHz frequency band and a crowded $700$ MHz frequency band. However, as discussed in Section \ref{subsec:singlecellres} and depicted in Figures \ref{fig:out700mb} and \ref{fig:out2.6mb}, a crowded $700$ MHz band leads to higher outages resulting in lower EE. In addition, it is infeasible to assign more than three UEs per PRB when the UEs are spatially multiplexed\footnote{For SM, zero-forcing cannot be applied if the BSs are assigned more than $3$ UEs per PRB due to limited number of antennas.} and high rates cannot be supported. Therefore, a bias of $35$ dB is injected to improve the UE to BS assignment\footnote{However, note that the PRB assignment and power control is performed on the true large-scale fading values.} such that we obtain a sparsely filled $700$ MHz band and a crowded $2.6$ GHz band (Figures \ref{fig:cdf700mb} and \ref{fig:cdf2.6mb}) via the distribution of UEs over BSs and through Table \ref{tab:numues} via the number of UEs assigned in each band.

\begin{table}[htp]
    \centering
    \begin{adjustbox}{width=0.7\textwidth}
    \begin{tabular}{|c|c|c|}
    \hline
        UE Assignment Algorithm  & No. of UEs in 700 MHz  & No. of UEs in 2.6 GHz  \\ \hline
        Basline (Greedy) & 195 & 0 \\ \hline
        Biased & 12 & 183 \\ \hline
        Biased Rematched 1 & 5 & 190 \\ \hline
        Biased Rematched 2 & 3 & 192 \\ \hline
        Biased Threshold (val = 1) & 16 & 179 \\ \hline
        Biased Threshold (val = 15) & 18 & 177 \\ \hline
    \end{tabular}
    \end{adjustbox}
    \caption{Total number of UEs assigned to BSs in $700$ MHz and $2.6$ GHz bands.}
    \label{tab:numues}
\end{table}

\begin{figure}[htp]
    \centering
        \begin{subfigure}{0.55\linewidth}
        \centering
        \includegraphics[width=\textwidth]{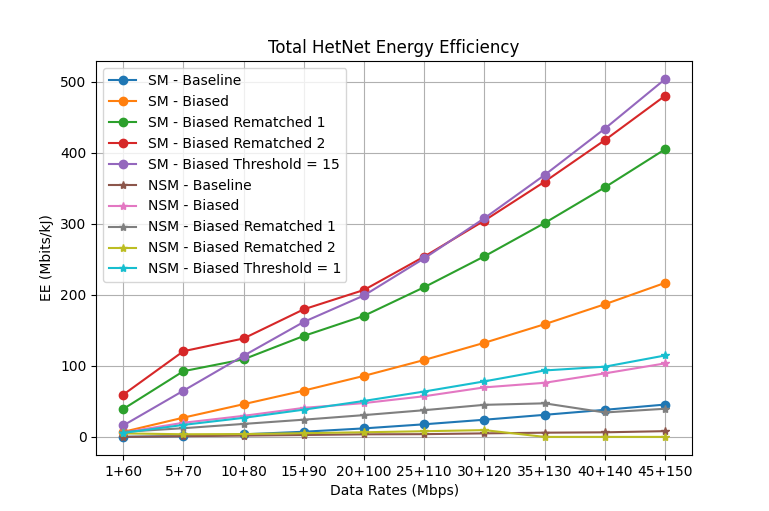}
        \caption{Overall EE of the network.}
        \label{fig:ee7002600}
        \end{subfigure}
        \begin{subfigure}{0.4\linewidth}
        \centering
        \includegraphics[width=\textwidth]{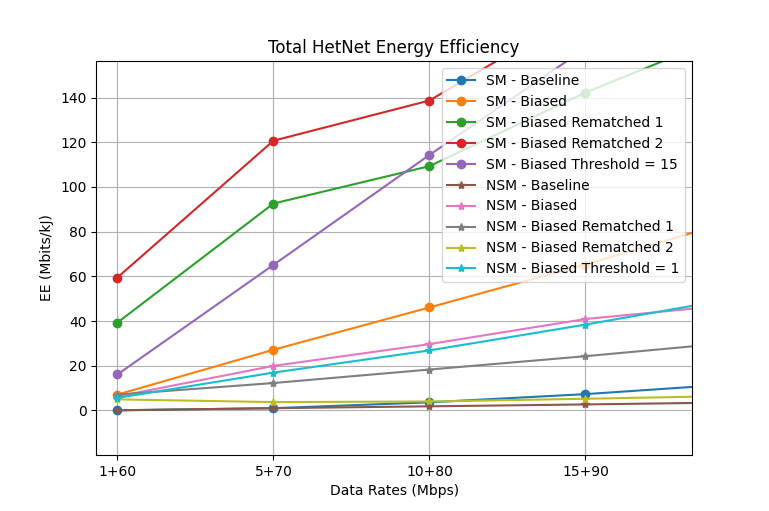}
        \caption{EE of the network for smaller rates.}
        \label{fig:ee7002600zoomed}
        \end{subfigure}
    \caption{Total EE of the HetNet for different UE to BS assignment algorithms.}
    \label{fig:ee7002600network}
    \end{figure}

Figure \ref{fig:ee7002600network} shows the EE of the HetNet network\footnote{Note that the rates shown on the x-axis have the  following format: (Rate in $700$ MHz band + Rate in $2.6$ GHz band). For example, all UEs in $700$ MHz band having a rate of $1$ Mbps and all UEs in $2.6$ GHz band having a rate of $10$ Mbps is represented by $(1 + 10)$ Mbps.}. It can be observed that SM significantly outperforms NSM. In addition, off-loading the UEs to \begin{change}a\end{change} higher frequency band by adding a bias yields much larger EE gains relative to the greedy baseline matching. Furthermore, for SM, switching-off BSs help lower the power consumption, thereby, increasing the EE of the network. Among the re-matching techniques, for lower rates, a simple switch-off of BSs with low EE (in these simulation BSs with one and two UEs were switched-off) provides a higher EE, whereas, for larger data rates, a \begin{change}threshold-based\end{change} reassignment of the UEs yields a higher EE. In contrast, a straightforward switch-off of BSs for NSM leads to much larger outages (Figure \ref{fig:outsmb}) and lower EE because the available PRBs- distributed among the UEs- are insufficient to support the rate requirements. Overall, to achieve high EE the UEs should be spatially multiplexed and each UE should be assigned maximum available PRBs. Depending on the required data rate, either BSs with one and two UEs assigned should be switched-off or a \begin{change}threshold-based\end{change} reassignment of the UEs to BSs should be adopted. Furthermore, lower frequency band should be sparsely filled, while higher frequency bands should be assigned more UEs.

\section{Conclusion and Future Work}
\label{sec:conclusion}
In this work, we derive novel algorithms for energy efficient UE to BS assignment and propose a re-matching algorithm to switch-off BSs with low EE for massive MIMO HetNets. In addition, we propose a PRB assignment and power control algorithm, which can be implemented in a distributed way in a multi-cell network. Our analysis and simulations suggest the following: \emph{i)} $700$ MHz band alone should only be used for very low data rates (\begin{change}for example\end{change} control signaling, RRM). It cannot support higher rates even for a small number of UEs. The low frequency band, without having a higher frequency band, is inadequate to support large data rates and number of UEs, \emph{ii)} $2.6$ GHz band alone should be used to support UEs with large rates, where the UEs should be spatially multiplexed, each UE should be assigned \begin{change}the\end{change} maximum available PRBs and a \begin{change}threshold-based\end{change} reassignment of the UEs to BSs should be adopted to achieve higher EE. In addition, a dense deployment of BSs is \begin{change}a\end{change} necessary condition to enable the advantages of the $2.6$ GHz described above, and \emph{iii)} the combined $700$ MHz and $2.6$ GHz band performance is worse than $2.6$ GHz band alone, because UEs assigned to \begin{change}the $700$ MHz band\end{change} are not flexible and achieve lower rates. Overall, it is recommended that if a BS has multiple antennas, spatial multiplexing should be always used and each UE should be assigned \begin{change}the\end{change} maximum available PRBs (also for low load and \begin{change}a\end{change} low number of UEs).

For future work, numerical assessments with larger number of UEs and higher loads should be performed. We \begin{change}anticipate\end{change} that NSM might be insufficient to support a large number of UEs \begin{change}given\end{change} its limited resources. Furthermore, the proposed re-matching and bias UE assignments \begin{change}might need to be reevaluated to accommodate a larger variety\end{change} of data rates. \begin{change}While considering the bias UE assignment, it's possible to formulate a large-scale fading bias model to optimize the large-scale fading bias value per BS, as demonstrated in \cite{6166483}.\end{change} \begin{change}Additionally\end{change}, the number of active antennas per BS can also be optimized. 

\begin{change}Another intriguing avenue for future research is understanding the impact of the latest developments in BS hardware on performance and investigating the optimality of the proposed algorithms under these new settings. This exploration can be conducted both numerically and analytically based on the derived rate and energy consumption expressions.\end{change}

\appendices

\section {Proof Justifying the replacement of $\underline{R}_k$ with $R_k$ in \eqref{eq:EEm}}
\label{sec: replacement_proof}
In the following, we justify the replacement of $\underline{R}_k$ with $R_k$ by analysing the KKT conditions for a single cell and extending the results to the multi-cell scenario. 

We begin by analysing the total power consumption (P) for a single cell, which with equal power allocated across all the PRBs assigned to a UE associated with this cell, can be written as  

\begin{eqnarray}
    P = \frac{1}{\eta_{PA}}\sum_{k}N_k \, p_{k} +  I\left(\frac{1}{\lambda_{0}}P_{FIX} + \frac{1}{\lambda_{1}}P_{SYNC} \right) + D_{0} a + \left(D_{1} a \left(\sum_{k}N_k \right) \right) 
    + C. 
 \label{eq:pcscell}
\end{eqnarray}

Note that only the first term in (\ref{eq:pcscell}) depends on the transmit power allocated to the UEs in this cell. Therefore, for simplicity, we drop the remaining terms. Under the assumption that the UE to BS assignment has been fixed, the total power minimization can be defined as

\begin{subequations}    
\begin{alignat}{2}
\min_{\boldsymbol{p} \geq 0}   &\quad&     &  \sum_{k=1}^{K} N_k p_{k} \label{eq:objective}\\
\text{s.t.}                    &\quad&     & \sum_{k=1}^{K} p_{k} \leq P^{\max}  \label{eq:powerbudget} \\
                               &\quad&     & (\Bar{b}\cdot N_k) \log_2 (1 + (M-K) p_{k} \Tilde{\beta_{k}}) \geq \underline{R}_{k} \; , \forall k=1,\dots, K, \label{eq:ratecons}
\end{alignat}
\label{eq:optproblem}
\end{subequations}
where $N_K$ is the number of PRBs assigned to a UE, $\Tilde{\beta_{k}}$ represents the large scale fading normalized by the noise power, $M$ is the number of antennas, and $K$ is the number of multiplexed UEs. The minimum rate constraint (\ref{eq:ratecons}) can be rewritten as $p_{k} \geq \left(\frac{2^{\frac{\underline{R}_{k}}{(\Bar{b}\cdot N_k)}} - 1}{\Tilde{\beta_{k}}(M-K)}\right), \forall k=1,\dots,K$. 
Let us denote $P^{\min}_{k}=\left(\frac{2^{\frac{\underline{R}_{k}}{(\Bar{b}\cdot N_k)}} - 1}{\Tilde{\beta_{k}}(M-K)}\right),~\forall k=1,\dots,K$. Then, Problem (\ref{eq:optproblem}) can be reformulated as

\begin{subequations}    
\begin{alignat}{2}
\min_{\boldsymbol{p} \geq 0}   &\quad&     &  \sum_{k=1}^{K} N_k p_{k} \label{eq:obj1}\\
\text{s.t.}                    &\quad&     & \sum_{k=1}^{K} p_{k} \leq P^{\max}  \label{eq:pb1} \\
                               &\quad&     & p_{k}  \geq P^{\min}_{k}, \forall k=1,\dots,K \label{eq:rc1}
\end{alignat}
\label{eq:optprob1}
\end{subequations}

% The minimum rate constraint in (\ref{eq:rc1}) is affine in $\boldsymbol{p}$, and hence,
Problem (\ref{eq:optprob1}) is convex in $\boldsymbol{p}$ with an affine feasible set. It can be shown by analyzing the KKT conditions that in the power minimization problem, the allocated power to each UE is assigned to maintain its minimum rate demand. The Slater's condition holds in \eqref{eq:optprob1} since it is convex and there exists $\boldsymbol{p} \geq 0$ satisfying \eqref{eq:pb1} and \eqref{eq:rc1} with strict inequalities. Therefore, the strong duality in \eqref{eq:optprob1} holds. Hence, the KKT conditions are satisfied and the optimal solution $\boldsymbol{p}^*$ can be obtained using the Lagrange dual method \cite{0521833787}. The Lagrange function (lower-bound) of \eqref{eq:optprob1} is given by
\begin{equation}
    L(\boldsymbol{p},\boldsymbol{\mu},\boldsymbol{\delta},\nu) = \sum\limits_{k=1}^{K} N_k p_k 
+ \sum\limits_{k=1}^{K} \mu_k \left( P^{\min}_{k} - p_{k} \right)
+ \sum\limits_{k=1}^{K} \delta_k (-p_k) +\nu \left( \sum\limits_{i=k}^{K} p_k - P^{\text{max}} \right),
\end{equation}
where $\boldsymbol{\mu}=[\mu_1,\dots,\mu_K]$,  $\boldsymbol{\delta}=[\delta_1,\dots,\delta_K]$, and $\nu$ are the Lagrangian multipliers corresponding to the constraints \eqref{eq:rc1}, \eqref{eq:pb1}, and $p_k\geq 0,~k=1,\dots,K$, respectively. The Lagrange dual problem is given by
\begin{subequations}
	\begin{align*}
	\max_{\boldsymbol{\mu},\boldsymbol{\delta},\nu}\hspace{.0 cm}	
	~~ & \inf\limits_{\boldsymbol{p}} \left\{ L(\boldsymbol{p},\boldsymbol{\mu},\boldsymbol{\delta},\nu) \right\}
	\\
	\text{s.t.}~~& \mu_k \geq 0,~\forall k=1,\dots,K,
	\\
	& \delta_k \geq 0,~\forall i=k,\dots,K.
	\end{align*}
\end{subequations}

The KKT conditions are listed below.
\begin{enumerate}
	\item Feasibility of the primal problem \eqref{eq:optprob1}:
	$$
	\textbf{C-1.1:}~p^*_{k}  \geq P^{\min}_{k} \; , \forall k,~~~~~\textbf{C-1.2:}~p^*_k \geq 0,~\forall k,~~~~~\textbf{C-1.3:}~\sum\limits_{k=1}^{K} p^*_k \leq P^{\text{max}}.
	$$
	\item Feasibility of the dual problem:
	$$
	\textbf{C-2.1:}~\mu^*_k \geq 0,~\forall k=1,\dots,K,~~~~~~\textbf{C-2.2:}~\delta^*_k \geq 0,~\forall k=1,\dots,K,~~~~~~\textbf{C-2.3:}~\nu^* \geq 0.
	$$
	\item The complementary slackness conditions:
	$$
	\textbf{C-3.1:}~\mu^*_k \left(  P^{\min}_{k} - p^*_{k} \right) = 0,\forall i=1,\dots,K,
	$$
	$$
	\textbf{C-3.2:}~\delta^*_k p^*_k = 0,~\forall i=1,\dots,K,~~~~~~~~~
	\textbf{C-3.3:}~\nu^*\left( \sum\limits_{i=k}^{K} p^*_k - P^{\text{max}} \right) = 0.
	$$
	\item The condition $\nabla_{\boldsymbol{p}^*} L(\boldsymbol{p}^*,\boldsymbol{\mu}^*,\boldsymbol{\delta}^*,\nu^*) =0$, which implies that
	$$
	\textbf{C-4:}~\frac{\partial L}{\partial p^*_k} = N_k - 
        \mu_k^* + \delta^*_k + \nu^* = 0, ~\forall k=1,\dots,K.
	$$
\end{enumerate}

The primal dual $\delta^*_k,~\forall k=1,\dots,K,$ acts as a slack variable in \textbf{C-4} (due to the KKT condition \textbf{C-2.2}), so it can be eliminated by reformulating the KKT conditions (\textbf{C-4},\textbf{C-2.2}) and \textbf{C-3.2} respectively as
\begin{equation}\label{transf minpow1 KKT}
\nu^* \geq  \mu_k^* - N_k,~\forall k=1,\dots,K,
\end{equation}
and
\begin{equation}\label{transf minpow2 KKT}
p^*_k \left( \nu^* - \left(\mu_k^* - N_k\right) \right) = 0,~\forall k=1,\dots,K.
\end{equation}

%$\underline{R}_k > 0,~\forall k$, implies that $P^\textbf{min}_{k} > 0,~\forall k$, then we have $p^*_k>0,~\forall k$, due to condition \textbf{C-1.1}. 
%Then from \eqref{transf minpow2 KKT} with $p^*_k>0,~\forall k$, we have
%$\left( \nu^* - \left(\mu_k^* -1\right) \right)=0,~\forall k$, implying that $\nu^* = \left(\mu_k^* -1\right),~\forall k$. Then, we have $\mu_i^*=\mu_j^*,~\forall i,j=1,\dots,K$.

If problem \eqref{eq:optprob1} is feasible with $\sum_{k=1}^{K} p_{k} < P^{\max}$, then the optimal solution satisfies $\sum_{k=1}^{K} p^*_{k} < P^{\max}$, meaning that $\nu^*=0$, due to condition \textbf{C-3.3}. In this case, $\mu_k^* - N_k = 0,~\forall k$, meaning that $\mu_k^* = N_k,~\forall k$, so $\mu_k^*>0,~\forall k$ as $N_k > 0$. According to condition \textbf{C-3.1}, we conclude that $P^{\min}_{k} - p^*_{k} = 0,~\forall k$, thus $p^*_{k} = P^{\min}_{k} ~\forall k$, which is equivalent to  $(\Bar{b}\cdot N_k) \log_2 (1 + (M-K) p_{k} \Tilde{\beta_{k}}) = \underline{R}_{k}, ~\forall k$. Therefore, replacing $\underline{R}_k$ with $R_k$ is justifiable. In the case $\underline{R}_k = 0$, no transmit power is allocated to the UE, i.e. $p_k^* = 0$. 

The analysis above can be easily extended to the multi-cell scenario considering fixed inter-cell interference in each iteration of the Algorithm 7, where 
$P^{\min}_{k} = \left(\frac{(2^{\frac{\underline{R}_{k}}{(\Bar{b} N_k)}} - 1) (I_k + \sigma_k)}{{\beta_{k}}(M-K)}\right),\forall k$.

\section{Proof of Theorem~\ref{theo:1}}
\label{sec: proof_theo1}
We illustrate the proof via the considered setup in Figure \ref{fig:interference}, where BS $m_1$ serves few users $k_1, \dots, k_3$, whereas, BS $m_2$ serves users $k_4, \dots, k_6$. We aim at showing that switching-off BS $m_1$ and assigning users $k_1, \dots, k_3$ to BS $m_2$, given that $k_1, \dots, k_3$ has the next best channel with BS $m_2$, is energy optimal under certain conditions (see (\ref{eq:cond_proof}) and (\ref{eq:cond_proof_1})) and the assumption of homogeneous BS with same load-independent energy consumption model\footnote{Note that the proof can be easily extended to heterogeneous BSs with different load-independent energy consumption model.}, and with same PRB allocation and power control strategy.   
\begin{figure}[htp]
    \centering
    \includegraphics[width=.6\linewidth]{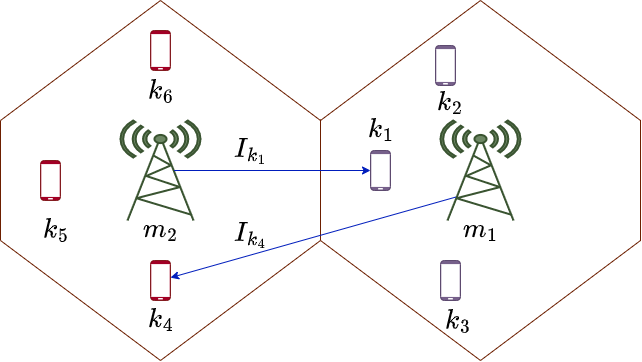}
    \caption{Example scenario to illustrate the proof of Theorem \ref{theo:1}.}
    \label{fig:interference}
\end{figure}

The proof is shown separately for SM and NSM as follows:
\begin{proof}

\emph{i)} \textbf{Spatial Multiplexing}: As depicted in Figure \ref{fig:interference}, consider BS $m_1$ that has only a few users $\{k_1,...,k_\ell\}$ assigned, facing interferences $I_{k_1}, ..., I_{k_\ell}$ from BS $m_2$. For the given minimum rate requirements $\underline{R}_{k_1},...,\underline{R}_{k_\ell}$, the required transmit power for user $k_l$ with PRBs $\alpha_l$ in BS $m_1$ can be computed as 
\begin{equation}
    p_{k_l, \alpha_l} = \frac{(2^{\underline{R'_{k_l}}} - 1)}{a - \ell} \frac{(I_{k_l} + N)}{\beta_{k_l, \alpha_l}},
\end{equation}
for $1 \leq l \leq \ell$ with $\underline{R'_{k_l}} = \frac{\underline{R_{k_l}}}{\bar{b} \cdot N_{m}}$ and maximum available PRBs $N_m$. The total power consumption of BS $m_1$, including the load independent power ($P_{LI}$), is calculated as
\begin{equation}
    P_{m_1} = \frac{N_m}{\eta_{PA}} \sum_{l=1}^\ell p_{k_l, \alpha_l} + P_{LI}. 
\end{equation}

Denote $m_2(l)$ as the second best channel $\beta_{k_l, \alpha'_l}$ for user $k_l$. If all users $k_1,...,k_\ell$ are assigned to their second best BSs $m_2(1),...,m_2(\ell)$ then $P_{m_1}$ amount of power can be saved. However, to support user $k_1,...,k_\ell$ by BS $m_2(1),...,m_2(\ell)$, additional power is needed because  \emph{a)} users $k_1,...,k_
\ell$ needs to be served and \emph{b)} users served by $m_2(1),...,m_2(\ell)$ need more power.

Let us focus on the re-matching of user $l$. Assuming $K_2(l) > 0$ users, $\mu(m_2(l)) = \{k_{l1}, \dots, k_{lK_2}\}$, are already assigned to $m_2(l)$, when user $k_l$ joins BS $m_2(l)$ with PRBs $\alpha'_l$ assigned, the required transmit power can be computed as, 
\begin{equation}
    p_{k_l, \alpha'_l} = \frac{(2^{\underline{R'_l}} - 1)}{a - K_2(l) -1} \frac{(\tilde{I}_{k_l} + N)}{\beta_{k_l, \alpha'_l}}
\end{equation}
where, it is observed that $\tilde{I}_{k_l} < I_{k_l}$ and $\beta_{k_l, \alpha'_l} \leq \beta_{k_l, \alpha_l}$, and the difference between the transmit powers before and after user $k_l$ joins BS $m_2(l)$ can be computed as
\begin{equation}
    \Delta_{k_{2l}} = p'_{k_{2l}, \alpha'_{2l}} - p_{k_{2l}, \alpha'_{2l}}, 
\end{equation}
where $p_{k_{2l}, \alpha'_{2l}}$ is the required transmit power before user $k_l$ joins BS $m_2(l)$ and is computed as 
\begin{equation}
    p_{k_{2l}, \alpha'_{2l}} = \frac{(2^{\underline{R'_{k_{2l}}}} - 1)}{a - K_2(l)} \frac{(I_{k_{2l}} + N)}{\beta_{k_{2l}}, \alpha'_{2l}}, 
\end{equation}
and $p'_{k_{2l}, \alpha'_{2l}}$ is the required transmit power after user $k_l$ joins BS $m_2(l)$ and is computed as
\begin{equation}
    p'_{k_{2l}, \alpha'_{2l}} = \frac{(2^{\underline{R'_{k_{2l}}}} - 1)}{a - K_2(l) - 1} \frac{(\tilde{I}_{k_{2l}} + N)}{\beta_{k_{2l}, \alpha'_{2l}}}.
\end{equation}

Then, 
\begin{equation}
    \Delta_{k_{2l}} = \frac{(2^{\underline{R'_{k_{2l}}}} - 1)}{\beta_{k_{2l}, \alpha'_{2l}}} \left(\frac{\tilde{I}_{k_{2l}} + N}{a - K_2(l) - 1} - \frac{I_{k_{2l}} + N}{a - K_2(l)}\right), 
\end{equation}
and the total difference in power accounting for all the users assigned to BS $m_2(l)$ can be calculated as 
\begin{equation}
    \sum_{k \in \mu(m_2(l))} \Delta_k + \Delta_{k_l}
\end{equation}
with
\begin{equation}
    \Delta_{k_l} = (2^{\underline{R'_{k_l}}} - 1) \left(\frac{\tilde{I}_{k_{l}} + N}{\beta_{k_{l}, \alpha'_l}} \frac{1}{a - K_2(l) - 1} - \frac{I_{k_{l}} + N}{\beta_{k_{l}, \alpha_l}} \frac{1}{a - \ell}\right). 
\end{equation}

Assuming the load independent power consumption for all involved BSs is the same, the difference of power consumption before and after re-matching can be obtained and compared to zero as 
\begin{alignat}{2}
  && \sum_{l=1}^\ell \sum_{k \in \mu(m_2(l))} \Delta_k + \Delta_{k_l} - P_{LI} &\leq 0,
  \\ 
  &\implies
  &\sum_{l=1}^\ell \Delta_{k_2(l)} + (2^{\underline{R'_{k_l}}} - 1) \left(\frac{\tilde{I}_{k_{l}} + N}{\beta_{k_{l}, \alpha'_l}} \frac{1}{a - K_2(l) - 1} - \frac{I_{k_{l}} + N}{\beta_{k_{l}, \alpha_l}} \frac{1}{a - \ell}\right) - P_{LI}
  & \leq 0 ,\\  
  &\implies
  &\sum_{l=1}^\ell \frac{\tilde{I}_{k_{l}} + N}{a - K_2(l) - 1} \beta_{k_{l}, \alpha'_l}^{-1}  - \frac{I_{k_{l}} + N}{a - \ell} \beta_{k_{l}, \alpha_l}^{-1}  
  &\leq \sum_{l=1}^\ell \frac{-\Delta_{k_2(l)}}{2^{\underline{R'_{k_l}}} - 1} + P_{LI}, \\
 &\implies
 & \sum_{l=1}^\ell \frac{(a-\ell)(\tilde{I}_{k_{l}} + N)\beta_{k_{l}, \alpha'_l}^{-1} - (a - K_2(l) - 1) (I_{k_{l}} + N)\beta_{k_{l}, \alpha_l}^{-1}}{(a - K_2(l) - 1)(a - \ell)} 
 & \leq \sum_{l=1}^\ell \frac{-\Delta_{k_2(l)}}{2^{\underline{R'_{k_l}}} - 1} + P_{LI}, \label{eq:lastiq} 
\end{alignat}

where $\Delta_{k_2(l)} = \sum_{k \in \mu(m_2(l))} \Delta_k$. Let $\gamma_{1l} = (a-K_2(l)-1)(I_{k_l}+N)$,  $\gamma_{2l} = (a-\ell)(\tilde{I}_{k_l}+N)$ and
\begin{equation}
    \Delta_l = \sum_{l=1}^\ell\frac{(-\Delta_{k_2(l)})}{(2^{\underline{R'_{k_l}}} - 1) \gamma_{1l}} (a - K_2(l) - 1) (a - \ell) + P_{LI}\sum_{l=1}^\ell \frac{(a - K_2(l) - 1) (a - \ell)}{\gamma_{1l}}.
    \label{eq: deltasm}
\end{equation} 
% \begin{equation}
%     \Delta_l = \frac{(-\Delta_{k_2(l)} + P_{LI})}{(2^{\underline{R'_{k_l}}} - 1) \gamma_{1l}} (a - K_2(l) - 1) (a - \ell).
%     \label{eq: deltasm}
% \end{equation} 
Then the following inequalities for $1 \leq l \leq \ell$ are sufficient for (\ref{eq:lastiq})
\begin{equation}
    \frac{\gamma_{2l}}{\gamma_{1l}} \beta_{k_{l}, \alpha'_l}^{-1}  \leq \beta_{k_{l}, \alpha_l}^{-1} + \Delta_l.
\label{eq:cond_proof}
\end{equation}

\end{proof}

The following remarks hold for all $\frac{\gamma_{2l}}{\gamma_{1l}}$ with $1 \leq l \leq \ell$, however, we only state it for a single user. 

\begin{remark}
From (\ref{eq:ratio_gamma}), it can be inferred that at least $a > K_2 + 1$ antennas are required.  
\begin{equation}
    \frac{\gamma_{21}}{\gamma_{11}} = \frac{a-1}{a-K_2-1}\frac{\tilde{I}_{k_1}+N}{I_{k_1}+N}
\label{eq:ratio_gamma}
\end{equation}
Let $a = K_2 + N$, where $N > 1$, then, $\frac{\gamma_{21}}{\gamma_{11}}$ can be written as, 
\begin{equation}
    \frac{\gamma_{21}}{\gamma_{11}} = \left(\frac{K_2}{N - 1} + 1\right) \left(\frac{\tilde{I}_{k_1}+N}{I_{k_1}+N}\right).
\label{eq:ratio_gamma_update}
\end{equation}

Since, $I_{k_1} > \tilde{I}_{k_1}$, the ratio of the interference is bounded, i.e, $0 < \left(\frac{\tilde{I}_{k_1}+N}{I_{k_1}+N}\right) \leq 1$, and therefore, $\frac{\gamma_{21}}{\gamma_{11}}$ is bounded, i.e., 
\begin{equation}
    0 \leq \frac{\gamma_{21}}{\gamma_{11}} \leq \left(\frac{K_2}{N - 1} + 1\right).
\end{equation}
\end{remark}

\begin{remark}
Under the assumption that $\beta_{k_1, \alpha'} \leq \beta_{k_1, \alpha}$, it can be observed in (\ref{eq:cond_proof}) that $\beta_{k_1, \alpha'} ^{-1} \geq \beta_{k_1, \alpha}^{-1}$. Thus, for (\ref{eq:cond_proof}) to hold,  $\Delta$ in (\ref{eq:cond_proof}) must be positive, i.e, 
\begin{eqnarray}
    \Delta & > & 0  \\
    \frac{(-\Delta_{k_2} )}{(2^{\underline{R'_{1}}} - 1) \gamma_{11}} (a - K_2 - 1) (a -1) + P_{LI} \frac{(a - K_2 - 1) (a -1)}{\gamma_{11}}& > & 0. \label{eq:cond_bound}
\end{eqnarray}
From (\ref{eq:cond_bound}), it can be inferred that
\begin{equation}
    \Delta_{k_2} < P_{LI}(2^{\underline{R'_{1}}} - 1). 
\end{equation}
\end{remark}

\begin{proof}
\emph{ii)} \textbf{No Spatial Multiplexing}: Similarly, for the given minimum rate requirements $\underline{R}_{k_l}$, the required transmit power for user $k_l$ with PRBs $\alpha_l$ in BS $m_1$ can be computed as 
\begin{equation}
    p_{k_l, \alpha_l} =  \frac{(2^{\underline{R'_l}} - 1)}{a - 1} \frac{(I_{k_l} + N)}{\beta_{k_l, \alpha_l}},
\end{equation}
with $\underline{R'_{k_l}} = \frac{\underline{R_{k_l}}}{\bar{b} \cdot N_m}$ and maximum available PRBs $N_m$. The total power consumption of BS $m_1$, including the load independent power ($P_{LI}$), is calculated as
\begin{equation}
    P_{m_1} = \frac{N_m}{\eta_{PA}} \sum_{l=1}^\ell p_{k_l, \alpha_l} + P_{LI}. 
\end{equation}

Under the assumption that BS $m_2(l)$ has the second best channel $\beta_{k_l, \alpha'_l}$ for user $k_l$, if all users $\{k_1,...,k_\ell\}$ are assigned to their second best BS $m_2(1),...,m_2(\ell)$, $P_{m_1}$ amount of power/energy can be saved. However, to support user $k_1,...,k_\ell$ by BS $m_2(1),...,m_2(\ell)$, additional power is needed because  \emph{a)} users $k_1,...,k_\ell$ need to be served and \emph{b)} users served by $m_2(1),...,m_2(\ell)$ need more power.
Assuming $K_2(l) > 0$ users, $\mu(m_2(l)) = \{k_{l1}, \dots, k_{lK_2}\}$, are already assigned to $m_2(l)$, when user $k_l$ joins BS $m_2(l)$ with PRBs $\alpha'_l$ assigned, the required transmit power can be computed as
\begin{equation}
    p_{k_l, \alpha'_l}  = \frac{(2^{\underline{R'_{k_l}}} - 1)(\tilde{I}_{k_l} + N)}{\beta_{k_l, \alpha'_l}(a - 1)},
\end{equation}
with $\underline{R'_{k_l}} = \frac{\underline{R_{k_l}}}{\bar{b} \cdot \alpha'_{l}}$, where $\alpha'_l$ represents the number of PRBs assigned to user $k_l$. It is observed that $\tilde{I}_{k_l} < I_{k_l}$ and $\beta_{k_l, \alpha'_l} \leq \beta_{k_l, \alpha_l}$, and the difference between the transmit powers after and before user $k_l$ joins BS $m_2(l)$ can be computed as
\begin{equation}
    \Delta_{k_{2l}} = p'_{k_{2l}, \alpha'_{2l}} - p_{k_{2l}, \alpha'_{2l}},  
\end{equation}
where $p_{m_2(l), k_{2l}} = p_{k_{2l}, \alpha'_{2l}}$ is the required transmit power before user $k_l$ joins BS $m_2(l)$ and is computed as 
\begin{equation}
   p_{k_{2l}, \alpha'_{2l}} = \frac{(2^{\underline{R'_{k_{2l}}}} - 1)(I_{k_{2l}} + N)}{\beta_{k_{2l}, \alpha'_{2l}}(a - 1)}, 
\end{equation}
with $\underline{R'_{k_{2l}}} = \frac{\underline{R_{k_{2l}}}}{\bar{b} \cdot \alpha'_{2l}}$ and $p'_{m_2(l), k_{2l}} = p'_{k_{2l}, \alpha'_{2l}}$ is the required transmit power after user $k_l$ joins BS $m_2(l)$ and is computed as
\begin{equation}
    p'_{k_{2l}, \alpha'_{2l}} = \frac{(2^{\underline{R''_{k_{2l}}}} - 1)(\tilde{I}_{k_{2l}} + N)}{\beta_{k_{2l}, \alpha'_{2l}}(a - 1)},
\end{equation}
with $\underline{R''_{k_{2l}}} = \frac{\underline{R_{k_{2l}}}}{\bar{b} \cdot \tilde{\alpha'}_{2l}}$
Then, 
\begin{eqnarray}
    \Delta_{k_{2l}} = \frac{1}{\beta_{k_{2l}, \alpha'_{2l}}(a - 1)} \left((2^{\underline{R''_{k_{2l}}}} - 1)(\tilde{I}_{k_{2l}} + N) - (2^{\underline{R'_{k_{2l}}}} - 1)(I_{k_{2l}} + N)\right), \\ 
     = \frac{1}{\beta_{k_{2l}, \alpha'_{2l}}(a - 1)} \left(2^{\underline{R''_{k_{2l}}}}(\tilde{I}_{k_{2l}} + N) - 2^{\underline{R'_{k_{2l}}}}(I_{k_{2l}} + N) + I_{k_{2l}} - \tilde{I}_{k_{2l}} \right),
\end{eqnarray}

and the total difference in power accounting for all the users assigned to BS $m_2$ can be calculated as 
\begin{equation}
    \sum_{k \in \mu(m_2(l))} \Delta_k + \Delta_{k_l}
\end{equation}
with
\begin{equation}
    \Delta_{k_l} = \frac{1}{(a-1)}\left(\frac{(2^{\underline{R''_{k_{l}}}} - 1)(\tilde{I}_{k_{l}} + N)}{\beta_{k_{l}, \alpha'_{l}}}  - \frac{(2^{\underline{R'_{k_l}}} - 1)(I_{k_{l}} + N)}{\beta_{k_{l}, \alpha_{l}}} \right). 
\end{equation}

Assuming the load independent power consumption for both the BSs is the same, the difference of power consumption before and after re-matching can be obtained and com[pared to zero as 
\begin{alignat}{2}
    && \sum_{l=1}^\ell\sum_{k \in \mu(m_2(l))} \Delta_k + \Delta_{k_l} - P_{LI} & \leq  0 , \\
    &\implies 
    &\sum_{l=1}^\ell\Delta_{k_2(l)} + \frac{1}{(a-1)}\left(\frac{(2^{\underline{R''_{k_l}}} - 1)(\tilde{I}_{k_{l}} + N)}{\beta_{k_{l}, \alpha'_{l}}}  - \frac{(2^{\underline{R'_{k_l}}} - 1)(I_{k_{l}} + N)}{\beta_{k_{l}, \alpha_l}} \right) - P_{LI}
    & \leq  0, \\
    &\implies
    &\sum_{l=1}^\ell (2^{\underline{R''_{k_l}}} - 1)(\tilde{I}_{k_{l}} + N)\beta_{k_{l}, \alpha'_l}^{-1}  - (2^{\underline{R'_{k_l}}} - 1)(I_{k_{l}} + N)\beta_{k_{l}, \alpha_l}^{-1} 
    & \leq \label{eq:lastiqnsm} \\ 
    &&\left(\sum_{l=1}^\ell(-\Delta_{k_2(l)}) + P_{LI}\right)(a-1), \nonumber
\end{alignat}
where $\Delta_{k_2(l)} = \sum_{k \in \mu(m2(l))} \Delta_k$. Let $\gamma_{1l} = (2^{\underline{R'_{k_l}}} - 1)(I_{k_l}+N)$,  $\gamma_{2l} = (2^{\underline{R''_{k_l}}} - 1)(\tilde{I}_{k_l}+N)$ and 
\begin{equation}
    \Delta_l = \sum_{l=1}^\ell\frac{(-\Delta_{k_2(l)})}{\gamma_{1l}}(a-1) +\frac{P_{LI}(a-1)}{\sum_{l=1}^\ell \gamma_{1l}} .
    \label{eq: deltansm}
\end{equation}
Then the following inequalities for $1 \leq l \leq \ell$ are sufficient for (\ref{eq:lastiqnsm})
\begin{equation}
    \frac{\gamma_{2l}}{\gamma_{11}} \beta_{k_{l}, \alpha'_l}^{-1}  \leq \beta_{k_{l}, \alpha_l}^{-1} + \Delta_l.
\label{eq:cond_proof_1}
\end{equation}

\end{proof}

The following remarks hold for all $\frac{\gamma_{2l}}{\gamma_{1l}}$ with $1 \leq l \leq \ell$, however, we only state it for a single user. 

\begin{remark}
From (\ref{eq:ratio_gamma_1}), it can be inferred that at least $\underline{R'_1} > 0$.  
\begin{equation}
    \frac{\gamma_{21}}{\gamma_{11}} = \frac{(2^{\underline{R''_{1}}} - 1)}{(2^{\underline{R'_{1}}} - 1)}\frac{(\tilde{I}_{k_1}+N)}{(I_{k_1}+N)}.
\label{eq:ratio_gamma_1}
\end{equation}
\end{remark}

\begin{remark}
Under the assumption that $\beta_{k_1, \alpha'} \leq \beta_{k_1, \alpha}$, it can be observed in (\ref{eq:cond_proof_1}) that $\beta_{k_1, \alpha'} ^{-1} \geq \beta_{k_1, \alpha}^{-1}$. Thus, for (\ref{eq:cond_proof_1}) to hold,  $\Delta$ in (\ref{eq:cond_proof_1}) must be positive, i.e, 
\begin{eqnarray}
    \Delta & > & 0  \\
    \frac{(-\Delta_{k_2} + P_{LI})}{\gamma_{11}} (a -1) & > & 0. \label{eq:cond_bound_1}
\end{eqnarray}
From (\ref{eq:cond_bound_1}), it can be inferred that
\begin{equation}
    \Delta_{k_2} < P_{LI}. 
\end{equation}
\end{remark}

\section{Proof of Theorem~\ref{theo:2}}
\label{sec: proof_theo2}
\begin{proof}

The proof for both NSM and SM is done by contra-diction as follows:

\emph{i)} \textbf{No Spatial Multiplexing}: Let $\tilde{p}_m$ be the transmit power allocated to the users in a cell with fixed interference. For any $\tilde{p}_m > 0$, $\sum_{k \in \mu(m)}\alpha_k(\tilde{p}_m)$ can be expressed as
\begin{equation}
    \sum_{k \in \mu(m)}\alpha_k(\tilde{p}_m) = \sum_{k \in \mu(m)} \frac{\underline{R_k}}{\bar{b} \log_2 \left(1 + \frac{(a_m - |\mu(\alpha)|)\beta_k \tilde{p}_m}{I_k + N}\right)},
    \label{eq:prbs}
\end{equation}
assuming equal power allocation per PRB. If $\tilde{p}_m < p_m^*$, it can be observed from (\ref{eq:prbs}), that 
\begin{eqnarray}
    \sum_{k \in \mu(m)} \alpha_k(\tilde{p}_{m}) > N_m 
    \label{eq:cNSM1}, 
\end{eqnarray}
where $N_m$ is the maximum available PRBs. However, since the system is constrained by the maximum number of available PRBs, any $\tilde{p}_m < p_m^*$ leads to outages because the minimum rate requirement of the users is not satisfied.   

For any $\tilde{p}_m > p_m^*$, it can be observed that 
\begin{eqnarray}
    \sum_{k \in \mu(m)} \alpha_k(\tilde{p}_{m}) < N_m, 
    \label{eq:cNSM2}
\end{eqnarray}
leading to a lower EE due to an increase in the total power consumption of the BS, i.e., 
\begin{eqnarray}
    P_m(\tilde{p}_m) > P_m(p_m^*).
    % P_m(\tilde{p}_m) - P_m(p_m^*) &>& 0.
    \label{eq:ctotpow}
\end{eqnarray}

To show that (\ref{eq:ctotpow}) holds, it suffices to show that updating the allocated transmit power for a single user results in a lower EE. Thus, between the vectors $\boldsymbol{\tilde{p}_m}$ and $\boldsymbol{p_m^*} $, the transmit power for only one user is updated, i.e., $\boldsymbol{\tilde{p}_m} \rightarrow \boldsymbol{p_m^*} = [p_{m1}^*, \tilde{p}_{m2}, \dots, \tilde{p}_{mk}]$ and (\ref{eq:cNSM2}) and (\ref{eq:cNSM}) can be expressed as\footnote{The subscript for the BS $m$ has been dropped for simplicity.}
\begin{eqnarray}
    \alpha_1(\tilde{p}_{1}) + \sum_{k=2}^K \alpha_k(\tilde{p}_{k}) &=& N_m - \delta \label{eq:cNSM_update1}\\
    \alpha_1(p_{1}^*) + \sum_{k=2}^{K} \alpha_k(\tilde{p}_{k}) &=& N_m, 
    \label{eq:cNSM_update2}
\end{eqnarray}
respectively, where $\delta \geq 0$ represents the change in the number of PRBs due to an increase in the allocated transmit power for a single user. Subtracting  (\ref{eq:cNSM_update1}) from (\ref{eq:cNSM_update2}), $\delta$ can be obtained as 
\begin{equation}
    \delta = \alpha_1(p_{1}^*) - \alpha_1(\tilde{p}_{1})
    \label{eq:cNSM_delt}
\end{equation}
and the total power consumption of the BS\footnote{It should be noted that the terms independent of PRBs $\alpha_k$ and transmit power $p_k$ can be treated as constants and therefore, have been omitted here.} with the optimal power $p_1^*$ can be computed as
\begin{eqnarray}
    P_m(\boldsymbol{p_m}^*) &=& \frac{1}{\eta_{PA}}\alpha_1(p_1^*)p_1^* + C \alpha_1(p_1^*) + \frac{1}{\eta_{PA}}\sum_{k = 2}^K \alpha_k(\tilde{p}_k) \tilde{p}_{k} + C \sum_{k=2}^K \alpha_k(\tilde{p}_k) \\
      &=& \alpha_1(p_1^*) \left(\frac{1}{\eta_{PA}}p_1^* + C\right) + \sum_{k = 2}^K \alpha_k(\tilde{p}_k) \left(\frac{1}{\eta_{PA}} \tilde{p}_{k} + C\right),   \label{eq:power_pstar}
\end{eqnarray}
whereas, the total power consumption of the BS with $\boldsymbol{\tilde{p}_m}$ can be computed as
\begin{eqnarray}
    P_m(\boldsymbol{\tilde{p}_m}) &=&  \frac{1}{\eta_{PA}} \sum_{k = 1}^K \alpha_k(\tilde{p}_k) \tilde{p}_k + C \sum_{k=1}^K \alpha_k(\tilde{p}_k) \\
    &=&  \alpha_1(\tilde{p}_1) \left(\frac{1}{\eta_{PA}}\tilde{p}_1 + C\right) + \sum_{k = 2}^K \alpha_k(\tilde{p}_k) \left(\frac{1}{\eta_{PA}} \tilde{p}_{k} + C\right) \\
    &=&  (\alpha_1(p_1^*) - \delta) \left(\frac{1}{\eta_{PA}}\tilde{p}_1 + C\right) + \sum_{k = 2}^K \alpha_k(\tilde{p}_k) \left(\frac{1}{\eta_{PA}} \tilde{p}_{k} + C\right) \label{eq:power_ptilde}, 
\end{eqnarray}
where, in (\ref{eq:power_pstar}) and (\ref{eq:power_ptilde}), $C = D_{m,1}a_m \left(\sum_{k}N_k \right)$ as shown in (\ref{eq:ecm}). From (\ref{eq:ctotpow}), after simplifying and using (\ref{eq:cNSM_delt}), it can be seen that
\begin{alignat}{2}
    &&\alpha_1(p_1^*) \left(\frac{1}{\eta_{PA}}\tilde{p}_1 + C\right) - \alpha_1(p_1^*) \left(\frac{1}{\eta_{PA}}p_1^* + C\right) - \delta \left(\frac{1}{\eta_{PA}}\tilde{p}_1 + C\right) &\geq 0 \\
    &\implies
    &\alpha_1(p_1^*) (\tilde{p}_1 - p_1^*) - \delta (\tilde{p}_1 + C\eta_{PA}) &\geq 0 \\
    &\implies 
    &\tilde{p}_1 (\alpha_1(\tilde{p}_1) + \delta) - \alpha_1(p_1^*)p_1^* - \delta \tilde{p}_1 + \delta C\eta_{PA} &\geq 0 \\
    &\implies
    &\alpha_1(\tilde{p}_1)\tilde{p}_1 - \alpha_1(p_1^*)p_1^* &\geq \delta C\eta_{PA}, 
\end{alignat}
which states that the difference between load-dependent part with $\tilde{p}$ and $p^*$, i.e., $\alpha_1(\tilde{p}_1)\tilde{p}_1 - \alpha_1(p_1^*)p_1^*$ must be greater than or equal to the load-independent part $\delta C\eta_{PA}$. 

\emph{ii)} \textbf{Spatial Multiplexing}: Similarly, it can be shown for SM that the optimal power allocation is obtained by finding the power $\boldsymbol{p^*}$ such that each user is assigned maximum available PRBs. 

From (\ref{eq:prbs_sm}), it can be observed that any $\tilde{p}_m < p_m^*$ leads to an increase in the number of PRBs exceeding maximum available PRBs $N_m$ 
\begin{equation}
    \alpha_k(\tilde{p}_m) =  \frac{\underline{R_k}}{\bar{b} \log_2 \left(1 + \frac{(a_m - |\mu(\alpha)|)\beta_k \tilde{p}_m}{I_k + N}\right)}
    \label{eq:prbs_sm}
\end{equation}
However, since no more than maximum available PRBs can be assigned to a user, $\tilde{p}_m < p_m^*$ leads to outages because the minimum rate requirement of the users with $\tilde{p}_m$ cannot be satisfied.  

For any $\tilde{p}_m > p_m^*$, it can be observed that 
\begin{eqnarray}
     (\alpha_1(\tilde{p}_{1}), \dots, \alpha_k(\tilde{p}_k))_{k \in \mu(m)} < N_m, 
    \label{eq:cSM1}
\end{eqnarray}
leading to a lower EE due to an increase in the total power consumption of the BS, i.e., 
\begin{eqnarray}
    P_m(\tilde{p}_m) > P_m(p_m^*).
    \label{eq:ctotpow_sm}
\end{eqnarray}
To show that (\ref{eq:ctotpow_sm}) holds, it suffices to show that updating the allocated transmit power for a single user results in a lower EE. Thus, between the vectors $\boldsymbol{\tilde{p}_m}$ and $\boldsymbol{p_m^*} $, the transmit power for a only one user is updated, i.e., $\boldsymbol{\tilde{p}_m} \rightarrow \boldsymbol{p_m^*} = [p_{m1}^*, \tilde{p}_{m2}, \dots, \tilde{p}_{mk}]$ and (\ref{eq:cSM1}) and (\ref{eq:cSM}) for this user can be expressed as\footnote{The subscript for the BS $m$ has been dropped for convenience.}
\begin{eqnarray}
    \alpha_1(\tilde{p}_{1})  &=& N_m - \delta \label{eq:cSM_update1}\\
    \alpha_1({p}_{1}^*)  &=& N_m, 
    \label{eq:cSM_update2}
\end{eqnarray}
respectively, where $\delta \geq 0$ represents the change in the number of PRBs due to an increase in the allocated transmit power for a single user. Subtracting  (\ref{eq:cSM_update1}) from (\ref{eq:cSM_update2}), $\delta$ can be obtained as 
\begin{equation}
    \delta = \alpha_1(p_{1}^*) - \alpha_1(\tilde{p}_{1})
    \label{eq:cSM_delt}
\end{equation}
and the total power consumption of the BS\footnote{It should be noted that the terms independent of PRBs $\alpha_k$ and transmit power $p_k$ can be treated as constants and therefore, have been omitted here.} with the optimal power $p_1^*$ can be computed as
\begin{eqnarray}
    P_m(\boldsymbol{p_m}^*) &=& \frac{1}{\eta_{PA}}\alpha_1(p_1^*)p_1^* + C \alpha_1(p_1^*) + \frac{1}{\eta_{PA}}\sum_{k = 2}^K \alpha_k(\tilde{p}_k) \tilde{p}_{k} + C \sum_{k=2}^K \alpha_k(\tilde{p}_k) \\
      &=& \alpha_1(p_1^*) \left(\frac{1}{\eta_{PA}}p_1^* + C\right) + N_m(K-1) \left(\frac{1}{\eta_{PA}} \tilde{p}_{k} + C\right),   \label{eq:power_pstar_sm}
\end{eqnarray}
whereas, the total power consumption of the BS with $\boldsymbol{\tilde{p}_m}$ can be computed as
\begin{eqnarray}
    P_m(\boldsymbol{\tilde{p}_m}) &=&  \frac{1}{\eta_{PA}} \sum_{k = 1}^K \alpha_k(\tilde{p}_k) \tilde{p}_k + C \sum_{k=1}^K \alpha_k(\tilde{p}_k) \\
    &=&  \alpha_1(\tilde{p}_1) \left(\frac{1}{\eta_{PA}}\tilde{p}_1 + C\right) + \sum_{k = 2}^K \alpha_k(\tilde{p}_k) \left(\frac{1}{\eta_{PA}} \tilde{p}_{k} + C\right) \\
    &=&  (\alpha_1(p_1^*) - \delta) \left(\frac{1}{\eta_{PA}}\tilde{p}_1 + C\right) + N_m (K-1) \left(\frac{1}{\eta_{PA}} \tilde{p}_{k} + C\right) \label{eq:power_ptilde_sm}, 
\end{eqnarray}
where, in (\ref{eq:power_pstar_sm}) and (\ref{eq:power_ptilde_sm}), $C = D_{m,1}a_m \left(\sum_{k}N_k \right)$ as shown in (\ref{eq:ecm}). From (\ref{eq:ctotpow_sm}), after simplifying and using (\ref{eq:cSM_delt}), it can be seen that
\begin{alignat}{2}
    &&\alpha_1(p_1^*) \left(\frac{1}{\eta_{PA}}\tilde{p}_1 + C\right) - \alpha_1(p_1^*) \left(\frac{1}{\eta_{PA}}p_1^* + C\right) - \delta \left(\frac{1}{\eta_{PA}}\tilde{p}_1 + C\right) &\geq 0 \\
    &\implies
    &\alpha_1(p_1^*) (\tilde{p}_1 - p_1^*) - \delta (\tilde{p}_1 + C\eta_{PA}) &\geq 0 \\
    &\implies
    &\tilde{p}_1 (\alpha_1(\tilde{p}_1) + \delta) - \alpha_1(p_1^*)p_1^* - \delta \tilde{p}_1 + \delta C\eta_{PA} &\geq 0 \\
    &\implies
    &\alpha_1(\tilde{p}_1)\tilde{p}_1 - \alpha_1(p_1^*)p_1^* &\geq \delta C\eta_{PA}, 
\end{alignat}
which states that the difference between load-dependent part with $\tilde{p}$ and $p^*$, i.e., $\alpha_1(\tilde{p}_1)\tilde{p}_1 - \alpha_1(p_1^*)p_1^*$ must be greater than or equal to the load-independent part $\delta C\eta_{PA}$. 
\end{proof} 

\section{Large-scale Fading Computation}
\label{sec:large_scale_fading}

Highly accurate channel estimates can be determined using ray tracing \cite{doi:https://doi.org/10.1002/9781118410998.ch10}. In this work, we use the in-house developed ray tracing based large-scale fading predictor, called Femto Predictor (\textit{FemtoPred}), which was developed at the Institute for Communications Technology at TU Braunschweig. To describe an outdoor model with an arbitrary number of UEs, the \textit{FemtoPred} processes 3D building data accounting for reflection, diffraction and transmission, along with different antenna diagrams. The \textit{FemtoPred} starts by creating pairs between cell sites and map pixels, and deriving the reflected, diffracted and transmitted paths, even if a LoS path is detected. Then, the channels are calculated considering path losses, shadow and multi-path fading, while accounting for LoS and non line-of-sight (NLoS).

\subsubsection {Large-scale Fading}

Within our scenario, a macroscopic approach was adopted, where, every pixel of the map represents a large-scale fading value. Since we are simulating multiple frequencies ($700$ MHz and $2.6$ GHz), ray tracing is well-suited for simulating different wave lengths of different frequencies accounting for the distance between the $T_x$ and $R_x$ in case of a LOS. 

The large-scale fading prediction inside the \textit{FemtoPred} can be divided into three steps \cite{ThoR}: firstly, via geometrical approach, the image method is used to find the only qualified rays to speed up the calculations. Secondly, for each ray electromagnetic calculations are performed to determine the path loss. In addition, the angel-of-arrival (AoA) at the receiver and the angle-of departure (AoD) at the transmitter are derived. Finally, to calculate the antenna pattern, the 3D antenna gain is multiplied by the complex received power. 

For NLOS, the large-scale fading calculations consider multiple components including diffraction, reflection and transmission, as described below.

\paragraph{Diffraction}

The knife edge model by Deygout \cite{1138719} was used to calculate the buildings' roof diffraction. For a direct ray path, all the intersected surfaces are determined on the assumption that a diffracted edge always belongs to the surface of the intersected ray path. Therefore, in case of a single obstacle intersecting the path of the ray, the diffraction $C$ is computed as \cite{doi:https://doi.org/10.1002/9781118410998.ch12}:

\begin{eqnarray}
    C_{K,E} (dB)= 6.9 + 20 \log(\sqrt{(v-0.1)^2 + 1} + v - 0.1)
\label{eq: CKE1}
\end{eqnarray}

where, $v$ is the Fresnel diffraction. As shown in Figure\footnote{Source: LTE-Advance and Next Generation Wireless Networks} {\ref{fig:Fresnel_diffraction}}, the Fresnel diffraction is dependant on the height $h$ of the obstacle that intercepts the ray, the distance between the transmitter $T_x$ and the obstacle $r1$ (along the LoS) and the distance between the obstacle $r2$ and the receiver $R_x$ (along the LoS). 

\begin{figure}[htp]
    \centering    \includegraphics[width=0.5\textwidth]{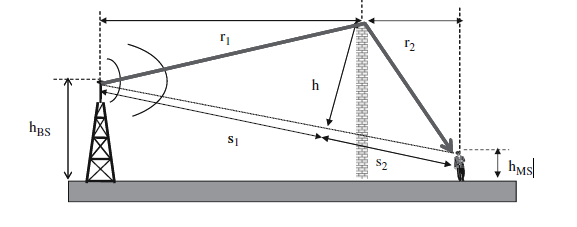}
    \caption{Calculating Fresnel diffraction}
    \label{fig:Fresnel_diffraction}
\end{figure}

Thus, the Fresnel diffraction is computed as, 
\begin{eqnarray}
   v = h \sqrt{\frac{2}{\lambda} \frac{s1 + s2}{r1r2}}.
\label{eq: FresnelDiffraction}
\end{eqnarray}
where, and $\lambda$ is the wavelength and $s1 = \sqrt{h^2 + r1^2}$ and $s2 = \sqrt{h^2 + r2^2}$ are the Pythagorean distance between the $T_x$ and $R_x$. 

For multiple obstacles that intersect the path of the ray, the Deygout method describes the following algorithm:

\begin{itemize}
    \item {Step 1 :} The main obstacle between the $T_x$ and the $R_x$ is calculated based on the highest $v$ values from (\ref{eq: FresnelDiffraction}), where only a single knife edge is considered in  between and the diffraction loss\footnote{In this implementation, the diffraction loss was also multiplied by $k$ which is the \emph{empirical correction factor} with values ranging between $0.2$ to $0.5$.} is $C_2(v_2)$.
    \item {Step 2 :} The filtering procedure of the qualified points to the left $\Vec{C}(v_{1,left})$ and the right of the ray $\Vec{C}(v_{3,right})$.
    \item {Step 3 :} If more obstacles are still available, then Step $2$ is repeated.
    
In case of three obstacles between the BS and the mobile station, total diffraction loss, using the Deygout method, can be  calculated as,

\begin{eqnarray}
   C_{d} (dB) = C_2 (v_2) + C_1 {(v_{1,left})} + C_3 {(v_{3,right})}
\label{eq: CKE2}
\end{eqnarray}

\end{itemize}

\paragraph{Reflection and Transmission}

In our simulator, the reflection and transmission of the signals is dependant on the signal frequency. For this to be as accurate as possible, a dielectric layer needs to be implemented based on the material of the buildings inside the simulated environment (in this work, concrete buildings).

The direction of the incidence can be defined as the angle between the wave vector and the normal to the interface \cite{doi:https://doi.org/10.1002/9781118410998.ch2}. The angle of incident ($\Theta_e$), transmitted ($\Theta_t$) and reflected ($\Theta_r$) wave  can be defined as,

\begin{eqnarray}
   \frac{\sin{\Theta_t} }{\sin{\Theta_e}} = \frac{\sqrt{\epsilon_1}}{\sqrt{\epsilon_2}}
\label{eq: R&T2}
\end{eqnarray}

where, $\Theta_r = \Theta_e$, and $\epsilon_1$, $\epsilon_2$ are the dielectric constants of the each side of the material.

To compute the magnitude of the wave, a distinction between whether the electric field vector is parallel to the interface of the dielectrics (TE wave) and whether the magnetic field is parallel to the interface transversal magnetic field (TM wave) is necessary, where  $\rho_{TE}$, $T_{TE}$, $\rho_{TM}$, and $ T_{TM}$ can be computed as,

\begin{eqnarray}
   \rho_{TE} = \frac{\sqrt{\epsilon_1} \cos{\Theta_e} - \sqrt{\epsilon_2} \cos{\Theta_t} } 
   {\sqrt{\epsilon_1} \cos{\Theta_e} + \sqrt{\epsilon_2} \cos{\Theta_t}}
\label{eq:P_TE_1}
\end{eqnarray}

\begin{eqnarray}
   T_{TE} = \frac{2 \sqrt{\epsilon_1} \cos{\Theta_e} } 
   {\sqrt{\epsilon_1} \cos{\Theta_e} + \sqrt{\epsilon_2} \cos{\Theta_t}}
\label{eq: T_TE_1}
\end{eqnarray}

\begin{eqnarray}
   \rho_{TM} = \frac{\sqrt{\epsilon_2} \cos{\Theta_e} - \sqrt{\epsilon_1} \cos{\Theta_t} } 
   {\sqrt{\epsilon_2} \cos{\Theta_e} + \sqrt{\epsilon_1} \cos{\Theta_t}}
\label{eq:P_TE_2}
\end{eqnarray}

\begin{eqnarray}
   T_{TM} = \frac{2 \sqrt{\epsilon_1} \cos{\Theta_e} } 
   {\sqrt{\epsilon_2} \cos{\Theta_e} + \sqrt{\epsilon_1} \cos{\Theta_t}}
\label{eq: T_TE_2}
\end{eqnarray}

Conclusively, the overall reflection and transmission values can be calculated for a dielectric layer of thickness $d_{layer}$ as,

\begin{eqnarray}
  T = \frac{T_1T_2 e ^ {-ja}}{1 + \rho_1 \rho_2 e ^ {-2ja}},
\label{eq: T}
\end{eqnarray}

\begin{eqnarray}
  \rho = \frac{\rho_1 + \rho_2 e ^ {-ja} }{1 + \rho_1 \rho_2 e ^ {-2ja}},
\label{eq: rho_1}
\end{eqnarray}

\begin{eqnarray}
  \alpha = \frac{2\pi}{\lambda} \sqrt{\epsilon_2} d_{layer} \cos({\Theta_t})
\label{eq: rho_2}
\end{eqnarray}

where $\rho_1, \rho_2$, $T_1$, and $T_2$ are the reflection and transmission coefficients respectively and $\alpha$ is the electrical length of the layer.
\subsubsection{Antenna Modeling} 

In our simulation, directional antenna patterns have been implemented for both frequencies. For the $700$ MHz band, a $3$GPP antenna model \cite{3GPP} was generated, where, the antenna element $A_{E}$ consists of two main components, i.e., vertical and horizontal propagation patterns.

The antenna diagram can be modeled based on (\ref{eq: VPattern}) and (\ref{eq: HPattern}), 

\begin{eqnarray}
    A_{E,V}(\theta)\text{(dB)} = -\min\Bigg[12 \Bigg(\frac { \theta - 90 ^{\circ}}{\theta_{3dB}}\Bigg)^2, SLA_v\Bigg]
\label{eq: VPattern}
\end{eqnarray}

% And the horizontal propagation pattern:
\begin{eqnarray}
    A_{E,H}(\phi)\text{(dB)} = -\min\Bigg[12 \Bigg(\frac { \phi''}{\phi_{3dB}}\Bigg)^2, A_m\Bigg]
\label{eq: HPattern}
\end{eqnarray}
where, $A_{E,V}$ and $A_{E,H}$ represent the vertical and horizontal propagation pattern, respectively. A $3$D antenna element pattern can be acquired by combining the vertical and horizontal patterns and adding the maximum antenna gain as,

\begin{eqnarray}
A(\theta,\phi)\text{(dB)}= G_{E,max} -\min\bigg\{ - \bigg[A_{E,V}(\theta)+A_{E,H}(\phi) \bigg],A_m \Bigg\}
\label{eq: CPattern}
\end{eqnarray}
where, $\theta$ is the zenith angle ([0 , 180] degrees), $\phi$ is the azimuth angle ([-180 , 180] degrees), $\phi_{3dB} = 65^{\circ}$ is the $3$dB beamwidth of the horizontal antenna pattern, $\theta_{3dB}$ = $15^{\circ}$ is the  $3$dB beamwidth of the vertical antenna pattern, $SLA_v$ = 30 is the sidelobe attenuation, $A_m$ is the maximum attenuation of the main lobe, and $G_{E,max} = 8$ dBi is the maximum antenna element gain\footnote{These parameter values have been specified for the $700$ MHz antenna diagram.}.  
 
For the 2.6 GHz band a single band sector "Commscope (HWXX-6516DS-VTM2600)" antenna diagram pattern was chosen.

\bibliographystyle{IEEEtran}
\bibliography{IEEEabrv,paper}

\end{document}